\documentclass[11pt,journal]{IEEEtran}

\usepackage{amsfonts}
\usepackage{times}
\usepackage{latexsym}
\usepackage{amssymb}
\usepackage{amsmath}
\usepackage{cite}
\usepackage{verbatim}

\newcommand{\bydef}{\triangleq}
\newcommand{\tr}{{\it{tr}}}
\def\SNR{{\textsf{SNR}}}

\def\bydef{:=}

\def\bb0{{\mathbb{0}}}

\def\bydef{:=}
\def\ba{{\mathbf{a}}}
\def\bb{{\mathbf{b}}}

\def\bh{{\mathbf{h}}}

\def\bn{{\mathbf{n}}}

\def\br{{\mathbf{r}}}

\def\bv{{\mathbf{v}}}

\def\bx{{\mathbf{x}}}
\def\by{{\mathbf{y}}}
\def\bz{{\mathbf{z}}}
\def\b0{{\mathbf{0}}}

\def\bA{{\mathbf{A}}}

\def\bF{{\mathbf{F}}}
\def\bG{{\mathbf{G}}}
\def\bH{{\mathbf{H}}}
\def\bI{{\mathbf{I}}}

\def\bQ{{\mathbf{Q}}}

\def\bW{{\mathbf{W}}}


\def\bbC{{\mathbb{C}}}

\def\bbE{{\mathbb{E}}}

\def\bbN{{\mathbb{N}}}

\def\bbR{{\mathbb{R}}}

\def\bbZ{{\mathbb{Z}}}



\def\bydef{:=}

\def\sf0{{\mathsf{0}}}

\def\Nt{{N_t}}
\def\Nr{{N_r}}

\newcommand{\expeq}{\stackrel{.}{=}}
\newcommand{\expg}{\stackrel{.}{\ge}}
\newcommand{\expl}{\stackrel{.}{\le}}


\usepackage{graphicx}
\usepackage{amssymb}
\usepackage{amsfonts}
\usepackage{amsmath}
\usepackage{dsfont}


\begin{document}
\newtheorem{thm}{Theorem}
\newtheorem{lemma}[thm]{Lemma}
\newtheorem{rem}[thm]{Remark}
\newtheorem{exm}[thm]{Example}
\newtheorem{prop}[thm]{Proposition}
\newtheorem{defn}[thm]{Definition}
\newtheorem{cor}[thm]{Corollory}
\def\proof{\noindent\hspace{0em}{\itshape Proof: }}
\def\endproof{\hspace*{\fill}~\QED\par\endtrivlist\unskip}
\def\bh{{\mathbf{h}}}
\title{End-to-End Joint Antenna Selection Strategy
and Distributed Compress and Forward Strategy for Relay Channels}
\author{Rahul~Vaze and Robert W. Heath Jr. \\
The University of Texas at Austin \\
Department of Electrical and Computer Engineering \\
Wireless Networking and Communications Group \\
1 University Station C0803\\
Austin, TX 78712-0240\\
email: vaze@ece.utexas.edu, rheath@ece.utexas.edu
\thanks{This work was funded by DARPA through IT-MANET grant no. W911NF-07-1-0028.}}

\date{}
\maketitle
\noindent
\begin{abstract}

Multi-hop relay channels use multiple relay stages, each with multiple
relay nodes, to facilitate communication between a source and
destination. Previously, distributed space-time codes were proposed to maximize
the achievable diversity-multiplexing tradeoff, however, they fail to
achieve all the points of the optimal diversity-multiplexing tradeoff.
In the presence of a low-rate feedback link from the
destination to each relay stage and the source, this paper proposes
an end-to-end antenna selection (EEAS) strategy as an alternative to
distributed space-time codes.
The EEAS strategy
uses a subset of antennas of each relay stage for transmission of the source signal to the
destination with amplify and forwarding at each relay stage. The subsets are chosen
such that they maximize the end-to-end mutual information at the destination.
The EEAS strategy achieves the corner points of the optimal
diversity-multiplexing
tradeoff (corresponding to maximum diversity gain and maximum multiplexing
gain) and achieves better diversity gain at intermediate values of
multiplexing gain, versus the best known distributed space-time coding
strategies.
A distributed compress and forward (CF) strategy is also
proposed to achieve all points of
the optimal diversity-multiplexing tradeoff for a two-hop relay channel with
multiple relay nodes.

\end{abstract}

\section{Introduction}
Finding optimal transmission strategies for wireless ad-hoc networks in
terms of capacity, reliability, diversity-multiplexing (DM) tradeoff \cite{Zheng2003}, or delay, has been a long standing open
problem. The multi-hop relay channel is an important building block of
wireless ad-hoc networks. In a multi-hop relay channel, the source uses
multiple relay nodes to communicate with a single destination.
 An important first step in finding optimal transmission strategies for the wireless ad-hoc networks is to find optimal transmission strategies for the multi-hop relay channel.

In this paper, we focus on the design of transmission strategies to achieve
the optimal DM-tradeoff of the multi-hop relay channel.
The DM-tradeoff [1] characterizes the maximum achievable reliability (diversity gain)
for a given rate of increase of transmission rate (multiplexing gain), with increasing signal-to-noise
ratio (SNR). The DM-tradeoff curve is characterized by a set of points, where each point is a two-tuple
whose first coordinate is the multiplexing gain and the second coordinate is the maximum diversity gain
achievable at that multiplexing gain. We consider a multi-hop relay channel, where a source uses N -1
relay stages to communicate with its destination, and each relay stage is assumed to have one or more
relay nodes. Relay nodes are assumed to be full-duplex. Under these assumptions we find and characterize multi-hop relay strategies that achieve the DM-tradeoff curve (in the two hop case) or come close to the optimum DM-tradeoff curve while outperforming prior work (with more than two hops).

In prior work there have been many different transmit strategies proposed
to achieve the optimal DM-tradeoff of the multi-hop relay channel, such as
distributed space time block codes
(DSTBCs) \cite{Laneman2003, Laneman2004,
Nabar2004, Jing2004d, Jing2006a, Belfiore2007, Yiu2005,
Barbarossa2004, Damen2007, Oggier2006k, Jing2007, Jing2008,
Yang2007a, Sreeram2008, Vaze2008, Oggier2007b},
or relay selection \cite{Peters2007, Laneman2003, Laneman2004, Bletsas2006,
Zinan2005, Ibrahim2008, Caleb2007, Tanni2008}.
The best known DSTBCs \cite{Yang2007a, Sreeram2008} achieve the corner
points of the optimal DM-tradeoff of the multi-hop relay channel, corresponding
to the maximum diversity gain and maximum multiplexing gain,
however, fail to achieve the optimal
DM-tradeoff for intermediate values of multiplexing gain. Moreover, with DSTBCs \cite{Yang2007a, Sreeram2008} the
encoding and decoding complexity can be quite large.
Antenna selection (AS) or relay selection (RS) strategies have been
designed to achieve only the maximum diversity gain point of the optimal
DM-tradeoff when a small amount of feedback
is available from the destination, for a two-hop relay channel in
\cite{Peters2007, Laneman2003, Laneman2004, Bletsas2006, Zinan2005, Ibrahim2008, Caleb2007, Tanni2008}, and for a multi-hop relay channel in \cite{Vaze2008jsptocode}.
RS is also used for routing in multi-hop networks \cite{Park2003,Gui2007,Bohacek2008} to leverage path diversity gain. The primary advantages of AS and RS strategies over DSTBCs are that they require a minimal number of active antennas and reduce the encoding and decoding complexity compared to DSTBCs.
The only strategy that is known to achieve
all points of the optimal DM-tradeoff is the compress and forward (CF) strategy \cite{Yuksel2007}, but that is limited to a $2$-hop relay channel with a
single relay node.

In this paper we design an end-to-end antenna selection (EEAS)
strategy to maximize the achievable diversity gain for a given multiplexing gain
 in a multi-hop
relay channel. The EEAS strategy chooses a subset of
antennas from each relay stage that maximize the mutual information
at the destination \footnote{The proposed EEAS strategy is an extension of
the EEAS strategy proposed in \cite{Vaze2008jsptocode}, where only a
single antenna of each relay stage was used for transmission.}. The proposed
EEAS strategy is shown to achieve the corner points of the optimal
DM-tradeoff corresponding to maximum diversity gain and maximum
multiplexing gain. For intermediate values of multiplexing gains,
the achievable DM-tradeoff of the EEAS strategy does not meet with
an upper bound on the DM-tradeoff, but outperforms the achievable
DM-tradeoff of the best known DSTBCs \cite{Sreeram2008}. Other
advantages of the proposed EEAS strategy over DSTBCs \cite{Yang2007a,
Sreeram2008} include lower bit error rates due to less noise accumulation at
the destination, reduced decoding complexity and lesser latency.
We assume that the destination has the
channel state information (CSI) for all the channels in the receive mode.
Using the CSI, the destination performs subset selection, and
using a low rate feedback link feedbacks the index of the antennas to be used
by the source and each relay stage.

Even though our EEAS strategy performs better than the best known
DSTBCs \cite{Yang2007a, Sreeram2008},
it fails to achieve all points of the optimal DM-tradeoff.
To overcome this limitation, we propose a distributed CF strategy to achieve all points of the optimal DM-tradeoff of a $2$-hop relay channel with multiple relay nodes. Previously, the CF strategy of \cite{Cover1979} was shown to achieve
all points of the optimal DM-tradeoff of the $2$-hop relay channel with a
single relay node in \cite{Yuksel2007}.
The result of \cite{Yuksel2007}, however, does not extend for more
than one relay node.  With our
distributed CF strategy, each relay transmits a compressed version
of the received signal using Wyner-Ziv coding \cite{Wyner1976} without
decoding any other relay's message.
The destination first decodes the relay signals, and then uses the decoded
relay messages to decode the source message.

Our distributed strategy is a special case of the distributed CF strategy
proposed in \cite{Kramer2005}, where relays perform partial decoding of
other relay messages, and then use distributed compression to send their signals to
the destination. With partial decoding, the achievable rate
expression is quite complicated \cite{Kramer2005}, and it is hard to
compute the SNR exponent of the outage probability.
To simplify the achievable rate expression, we consider a special case of the CF strategy \cite{Kramer2005} where no relay decodes any other
relay's message. Consequently, the derivation for the SNR exponent
of the outage probability is simplified and we show that the special case of
CF strategy \cite{Kramer2005} is sufficient to achieve the optimal DM-tradeoff
for a $2$-hop relay channel with multiple relays.

{\it Organization:} The rest of the paper is organized as follows.
In Section \ref{sec:sys}, we describe the system model for the multi-hop
relay channel and summarize the key assumptions.
We review the diversity multiplexing (DM)-tradeoff for multiple antenna
channels in Section \ref{sec:dmtintro} and obtain an upper bound on the
DM-tradeoff of multi-hop relay channel.
In Section \ref{sec:full-dup} our EEAS strategy for the
multi-hop relay channel is described and its DM-tradeoff is computed.
In Section \ref{sec:2hop} we describe our distributed CF strategy and show
that it can achieve the optimal DM-tradeoff of $2$-hop relay channel with
any number of relay nodes.
Final conclusions are made in Section \ref{sec:conc}.

{\it Notation:}
We denote by ${\bA}$ a matrix, ${\bf a}$ a vector and $a_i$ the
$i^{th}$ element of ${\bf a}$. $\bA^{\dag}$ denotes the transpose conjugate of
matrix $\bA$.
The maximum and minimum eigenvalue of $\bA$ is denoted
 $\lambda_{max}(\bA)$ and $\lambda_{min}(\bA)$, respectively.
The determinant and trace of matrix  ${\bf A}$ is denoted by
$\det({\bA})$ and $\tr({\bA})$.
The field of real and complex numbers is denoted by $\bbR$ and $\bbC$,
respectively. The set of natural numbers is denoted by $\bbN$. The set $\{1, 2, \ldots n\}$ is denoted by
$[n], \ n \in \bbN$. The set $[n]/k$ denotes the set $\{1, 2, \ldots, k-1, k, \ldots n\}$, $k,\ n \ \in\bbN$. 
$[x]^{+}$ denotes $\max \{x,0\}$.
The space of $M\times N$ matrices with complex entries is denoted by
${\bbC}^{M\times N}$.
The Euclidean norm of a vector $\bf a$ is denoted by $|\ba|$.
The superscripts $^T, ^{\dag}$ represent the transpose and the transpose
conjugate.
The cardinality of a set ${\cal S}$ is denoted by $|{\cal S}|$.
The expectation of function $f(x)$ with respect to $x$
is denoted by ${\bbE}_{x}(f(x))$.
A circularly symmetric complex Gaussian random variable $x$ with zero mean and
variance $\sigma^2$
is denoted as $x \sim {\cal CN}(0,\sigma)$.
We use the symbol $\expeq$ to represent exponential equality i.e.,
let $f(x)$ be a
function of $x$, then  $f(x) \expeq x^a$ if $\lim_{x\rightarrow \infty}\frac{\log(f(x))}{\log x} = a$ and similarly $\expl$ and $\expg$ denote the exponential
less than or equal to and greater than or equal to relation, respectively.
To define a variable we use the symbol $\bydef$.

\section{System Model}
\label{sec:sys}
\begin{figure*}
\centering
\includegraphics[width= 7in]{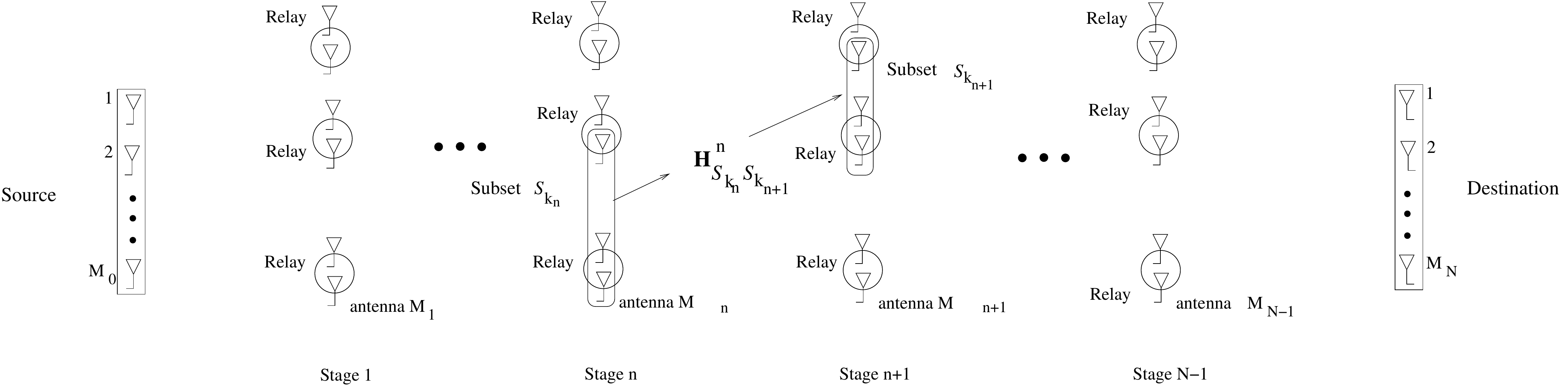}
\caption{System block diagram of a multi-hop relay channel with $N-1$ stages}
\label{blkdiag}
\end{figure*}

We consider a multi-hop relay channel where a source terminal with $M_0$
antennas wants to communicate with a destination terminal with
$M_N$ antennas via $N-1$ stages of relays as shown in Fig. \ref{blkdiag}.
The $n^{th}$ relay stage has $K_n$ relays and the $k^{th}$ relay
of $n^{th}$ stage has $M_{kn}$ antennas $n=1,2,\ldots,N-1$.
The total number of antennas in the
$n^{th}$ relay stage is $M_n \bydef \sum_{k=1}^{K_n}M_{kn}$.
In Section \ref{sec:2hop} we consider a $2$-hop relay channel with $K$ relay
nodes, where the $k^{th}$ relay has $m_k$ antennas and $\sum_{k=1}^K m_k= M_1$.
We assume that the relays do not generate their own data and each
relay stage has an average power constraint of $P$.
We assume that the relay nodes are synchronized at the frame level.
To keep the
relay functionality and relaying strategy simple we do not allow relay nodes to
cooperate among themselves.
For Section \ref{sec:full-dup} we assume that there is no direct path
between the source and the destination, but relax this assumption in Section
\ref{sec:2hop} for the $2$-hop relay channel. The absence of the direct path
is a reasonable
assumption for the case when relay stages are used for coverage
improvement and the signal strength on the direct path is very weak.
We also assume that relay stages are chosen in such a way that all the relay nodes of any two adjacent relay stages are connected to each other and there is no direct path between relay stage $n$
and $n+2$. This assumption is reasonable for the case when successive relay stages appear in increasing order of distance from the source towards the destination and any two relay nodes are chosen to lie in adjacent relay stages if they have sufficiently  good SNR between them. In any practical setting there will be interference received at any relay node of stage $n$ because of the signals transmitted from relay nodes of relay stage $0,\ldots,n-2$ and $n+2,\ldots,N-1$. Due to relatively large distances between non adjacent relay stages, however, this interference is quite small and we account for that in the additive noise term.
The system model is similar to the fully connected layered network with intra-layer links
\cite{Sreeram2008} and more general than the directed multi-hop relay channel model of
\cite{Yang2007a}.
We consider the full-duplex multi-hop relay channel,
where each relay node can transmit and receive at the
same time.

As shown in Fig. \ref{blkdiag}, the channel matrix between the
subset ${\cal S}_{k_n} \subset [M_n]$ of antennas of stage $n$ and the subset
${\cal S}_{k_{n+1}} \subset [M_{n+1}]$ of antennas of stage $n+1$ is denoted by
$\bH^n_{{\cal S}_{k_n}{\cal S}_{k_{n+1}}}$,
$ k_n=0,1,\ldots,$ $ {M_n}\choose {m}$, where
$|{\cal S}_{k_{n}}| =m \ \forall \ n$.
Stage $0$ represents the source and stage $N$ the destination.

In Section \ref{sec:2hop}, we only consider a $2$-hop relay channel and
denote the channel matrix between the source and $k^{th}$ relay by $\bH_{k}$
and between the $k^{th}$ relay and destination by $\bG_k$. The channel
between the source and destination is denoted by $\bH_{sd}$ and the channel
matrix between relay $k$ and relay $\ell$ by $\bF_{k\ell}$.

We assume that the CSI is known only at the destination and none of the relays have any CSI, i.e. the destination knows
$\bH^n_{{\cal S}_{k_n}{\cal S}_{k_{n+1}}}$, $ k_n=0,1,\ldots,$
${M_n}\choose {m}$, $ \ n =0,1,\ldots,N$. For Section \ref{sec:2hop}, we assume that
the destination knows $\bH_k, \bG_k, \bH_{sd}, \ \forall \ k$
and the $k^{th}$ relay node knows $\bH_{sd}, \bH_k$ and $\bG_k$
We assume that $\bH^n_{{\cal S}_{k_n}{\cal S}_{k_{n+1}}}, \bH_k, \bG_k, \bH_{sd}$ and $\bF_{k\ell}$ have independent and identically distributed (i.i.d.) ${\cal CN}(0,1)$ entries for all $n$ to model the channel as Rayleigh fading with
uncorrelated transmit and receive antennas.
We assume that all these
channels are frequency flat, block fading channels, where the channel coefficients remain constant in a block of time duration $T_c \ge N$ and change independently from block to block.


\section{Problem Formulation}
\label{sec:dmtintro}
We consider the design of transmission strategies to
achieve the DM-tradeoff of the multi-hop relay channel. In the next subsection
we briefly review the DM-tradeoff \cite{Zheng2003} for
point-to-point channels and obtain an upper bound on the DM-tradeoff of the
multi-hop relay channel.

 Review of the DM-Tradeoff:
\label{sec:dmt}
Following \cite{Zheng2003}, let ${\cal C}(\SNR)$ be a family of codes, one for
each $\SNR$. The multiplexing gain of
${\cal C}(\SNR)$ is $r$ if the data rate $R(\SNR)$ of ${\cal C}(\SNR)$ scales as $r$ with respect to
$\log \SNR$, i.e.
\[\lim_{\SNR\rightarrow \infty}\frac{R(\SNR)}{\log \SNR} =r.\]
Then the diversity gain $d(r)$ is defined as the rate of fall of probability of error $P_e$ of ${\cal
C}(\SNR)$ with respect to \SNR, i.e.
\[P_{e}(\SNR) \expeq \SNR^{-d(r)}.\]
The exponent $d(r)$ is called the  diversity gain at rate
$R=r\log\SNR$, and the curve joining $\left(r, d(r)\right)$ for
different values of $r$ characterizes the DM-tradeoff.
The DM-tradeoff for a point-to-point multi-antenna channel with
$N_t$ transmit and $N_r$ antennas has been computed in \cite{Zheng2003} by
first showing that $P_{e}(\SNR) \expeq P_{out}(r\log \SNR)$ and
then computing the exponent $d_{out}(r)$, where
\begin{equation}
\label{dmtmimo}
P_{out}(r\log \SNR)\expeq \SNR^{-d_{out}(r)}, \end{equation}
where $  d_{out}(r) = (\Nt-r)(\Nr-r)$,
for $r=0,1,\ldots,\min \{\Nt,\Nr\}$.

Next, we present an upper bound
on the DM-tradeoff of the multi-hop relay channel obtained in \cite{Yang2007a}.
\begin{lemma} \cite{Yang2007a}
\label{upbounddmt}
The DM-tradeoff curve of the multi-hop relay channel $\left(r,d(r)\right)$ is upper bounded by
 the piecewise linear function connecting the points
$\left(r,d^n(r)\right)$, $r=0,1,\ldots,\min\{M_n,M_{n+1}\}$ where
\[d^n(r) = (M_n-r)(M_{n+1}-r),\] for each $n=0,1,2,\ldots,N-1$.
\end{lemma}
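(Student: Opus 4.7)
The plan is to establish the bound via a standard cut-set argument, applied to each of the $N$ boundaries between successive relay stages, and then invoke the single-user MIMO DM-tradeoff of \cite{Zheng2003} on each resulting virtual channel.

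First, I would fix $n \in \{0,1,\ldots,N-1\}$ and consider the cut $\cS_n$ that partitions the network into a source side consisting of the source and relay stages $1,\ldots,n$, and a destination side consisting of relay stages $n+1,\ldots,N-1$ and the destination. By the layered structure assumed in Section \ref{sec:sys} (no direct link between non-adjacent stages), the only signals crossing $\cS_n$ are those transmitted from the $M_n$ antennas of stage $n$ to the $M_{n+1}$ antennas of stage $n+1$, corrupted by i.i.d.\ $\cC\cN(0,1)$ noise. The cut-set bound then gives, for any coding scheme achieving rate $R$,
\begin{equation*}
R \;\le\; \max_{p(\bx): \mathrm{tr}(\bE[\bx\bx^{\dag}])\le P}\; I(\bx;\by \mid \bH^n)
\;\le\; \log\det\!\left(\bI + \tfrac{P}{M_n}\bH^n(\bH^n)^{\dag}\right),
\end{equation*}
where $\bH^n \in \bbC^{M_{n+1}\times M_n}$ has i.i.d.\ $\cC\cN(0,1)$ entries (it is the aggregate channel between stage $n$ and stage $n+1$, whose blocks are by assumption i.i.d.\ Rayleigh). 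Thus any rate not supported by this virtual $M_n\times M_{n+1}$ point-to-point MIMO channel forces an outage across the cut.

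Second, I would apply the Zheng–Tse result (\ref{dmtmimo}) to this MIMO channel. At multiplexing gain $r$, the outage probability of the $M_n\times M_{n+1}$ MIMO channel satisfies
\begin{equation*}
P_{\mathrm{out}}^{n}(r\log \SNR) \expeq \SNR^{-(M_n - r)(M_{n+1}-r)} \quad \text{for integer } r\in\{0,1,\ldots,\min\{M_n,M_{n+1}\}\},
\end{equation*}
with the piecewise linear interpolation providing the tight exponent for non-integer $r$. Hence the SNR exponent of the error probability of any scheme, restricted by cut $\cS_n$, cannot exceed the piecewise linear function passing through $(r,d^n(r))$ at the integer values of $r$.

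Third, since the cut-set bound must hold simultaneously for every $n\in\{0,\ldots,N-1\}$, the achievable diversity gain is bounded above by the minimum of the $N$ curves obtained in the previous step, i.e.\ by the piecewise linear function connecting the points $(r,d^n(r))$ for each $n$, as claimed. The main subtlety to handle carefully is the second step in the chain above: the cut-set bound allows an arbitrary joint input distribution on the $M_n$ transmitting antennas of stage $n$ subject to the per-stage sum-power constraint $P$, so I must verify that maximizing over such distributions still yields the same SNR exponent as the i.i.d.\ Gaussian MIMO outage exponent; this follows because the optimal input covariance for MIMO outage under a sum-power constraint is $\tfrac{P}{M_n}\bI$ up to constants irrelevant at the exponent level. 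Everything else is bookkeeping: noting that only the exponent matters, that the direct path is absent (Section \ref{sec:full-dup}) and that non-adjacent cross-links contribute only to noise.
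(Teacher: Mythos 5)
Your proposal is correct and follows essentially the same route as the paper: a cut-set bound across each boundary between adjacent stages, with all antennas in a stage allowed to cooperate so that each cut reduces to a point-to-point $M_n\times M_{n+1}$ MIMO channel, followed by the Zheng--Tse outage exponent $(M_n-r)(M_{n+1}-r)$ and a minimum over the $N$ cuts. Your added remark about the input covariance under the sum-power constraint not affecting the SNR exponent is a detail the paper leaves implicit, but it does not change the argument.
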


The upper bound on the DM-tradeoff of multi-hop relay channel is obtained
by using the cut-set bound \cite{Cover2004} and allowing all relays in
each relay stage to cooperate.
Using the cut-set bound it follows that the mutual information between the
source and the destination cannot be more than the mutual information
between the source and any relay stage or between any two relay stages.
Moreover, by noting the fact that
mutual information between any two relays stages is upper bounded
by the maximum mutual information of a point-to-point MIMO channel
with $M_n$ transmit and $M_{n+1}$ receive antennas, $ n=0,1,\ldots,N-1$,
the result follows from (\ref{dmtmimo}).

In the next section we propose an EEAS strategy for the multi-hop relay
channel and compute its DM-tradeoff. We will show that the achievable
DM-tradeoff of the EEAS strategy meets the upper bound at $r=0$ and
$r=\min_{n=0,1,\ldots,N}M_n$.

\section{Joint End-to-End Multiple Antenna Selection Strategy}
\label{sec:full-dup} In this section we propose a joint end-to-end
multiple antenna selection strategy (JEEMAS)
for the multi-hop relay
channel and compute its DM-tradeoff. In the JEEMAS strategy,
a fixed number $(=m)$ of antennas are chosen from each relay
stage, to forward the
signal towards the destination using amplify and forward (AF).
Before introducing our JEEMAS strategy and
analyzing its DM-tradeoff, we need the following definitions and
Lemma \ref{lem:maxindepaths}.

\begin{defn} Let ${\cal S}_{k_n}$ be a subset of antennas of stage $n$, i.e.
${\cal S}_{k_{n}}\subset [M_n]$. Let $e^n_{{\cal S}_{k_n}{\cal S}_{k_{n+1}}}$ be
the edge joining the set of antennas ${\cal S}_{k_n}$ of stage $n$ to
the set of antennas ${\cal S}_{k_{n+1}}$  of stage $n+1$, where $|{\cal
S}_{k_n}|=m, \forall, n$. Then a path in a multi-hop relay channel is
defined as the sequence of edges $\left(e^0_{{\cal S}_{k_0}{\cal S}_{k_1}}, e^1_
{{\cal S}_{k_{1}}{\cal S}_{k_{2}}}
, \ldots, e^{N-1}_
{{\cal S}_{k_{N-1}}
{\cal S}_{k_{N}}}\right)$.
\end{defn}

\begin{defn}
Two paths $\left(e^0_{{\cal S}_{k_0}{\cal S}_{k_1}}, e^1_
{{\cal S}_{k_{1}}{\cal S}_{k_{2}}}
, \ldots, e^{N-1}_
{{\cal S}_{k_{N-1}}
{\cal S}_{k_{N}}}\right)$
and
$\left(e^0_{{\cal S}_{l_0}{\cal S}_{l_1}}, e^1_
{{\cal S}_{l_{1}}{\cal S}_{l_{2}}}
, \ldots, e^{N-1}_
{{\cal S}_{l_{N-1}}
{\cal S}_{l_{N}}}\right)$
 are called independent if
${\cal S}_{k_n} \cap {\cal S}_{l_n} = \phi, \ \forall \ n=0,1,\ldots,N$.
\end{defn}


In the next lemma we compute the maximum number of independent paths in
a multi-hop relay channel.
\begin{lemma}
\label{lem:maxindepaths}
The maximum number of independent paths in a multi-hop relay channel is
\[\alpha \bydef \min\left\{\left\lfloor\frac{M_n}{m}\right\rfloor\left\lfloor\frac{M_{n+1}}{m}
\right\rfloor\right\}, \ n=0,1,\ldots,N-1.\]
\end{lemma}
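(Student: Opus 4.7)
The plan is to prove the formula by a matching pair of arguments: an upper bound via a per-hop counting argument, and a matching construction by explicit partitioning of antennas at each stage.

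For the upper bound, I would take any independent collection $p_1,\ldots,p_L$ of paths and focus on a single hop $n\to n+1$. By the independence condition, the $m$-subsets $\{S^{(j)}_{k_n}\}_{j=1}^L$ used at stage $n$ are pairwise disjoint, so at most $\lfloor M_n/m\rfloor$ of them can be distinct, and similarly at most $\lfloor M_{n+1}/m\rfloor$ distinct subsets can appear at stage $n+1$. Since each path is identified at this hop by its pair $(S^{(j)}_{k_n},S^{(j)}_{k_{n+1}})$ drawn from the resulting grid, we get $L\le \lfloor M_n/m\rfloor\lfloor M_{n+1}/m\rfloor$, and minimizing over $n\in\{0,1,\ldots,N-1\}$ yields $L\le\alpha$.

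For the construction, I would partition the $M_n$ antennas at each stage $n$ into $\lfloor M_n/m\rfloor$ disjoint $m$-blocks (the leftover antennas are unused). Let $n^\star\in\arg\min_n \lfloor M_n/m\rfloor\lfloor M_{n+1}/m\rfloor$ and enumerate all $\alpha$ block-pairs between the partitions at stages $n^\star$ and $n^\star+1$. Each such block-pair becomes the hop-$n^\star$ edge of one of the $\alpha$ candidate paths, and I then extend each such edge to a full path by choosing consistent partition blocks at all other stages; by the bottleneck choice of $n^\star$, every other hop has at least as much partition-block room on each end as is needed.

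The principal obstacle is the extension step: turning the $\alpha$ chosen edges at hop $n^\star$ into $\alpha$ full paths that remain pairwise disjoint at every intermediate stage. I would handle this by propagating block assignments outward from $n^\star$ via a sequence of Hall-type bipartite matchings on the complete bipartite partition graphs at each adjacent hop; the complete bipartite structure of the partition graph at each hop makes each such matching easy to exhibit, and the bottleneck choice of $\alpha$ guarantees enough blocks at each stage to accommodate all $\alpha$ paths simultaneously. The careful combinatorial bookkeeping needed to verify that the resulting block sequences give genuinely independent paths at every stage is where I expect most of the technical effort to lie.
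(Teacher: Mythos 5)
There is a genuine gap, and it sits at the definition of ``independent.'' Your upper-bound half takes Definition 2 at face value: the stage-$n$ subsets of $L$ pairwise independent paths are pairwise disjoint. But that observation already gives $L\le\lfloor M_n/m\rfloor$ for every $n$, hence $L\le\min_n\lfloor M_n/m\rfloor$, which is \emph{strictly smaller} than $\alpha$ in general (take $M_n=2m$ for all $n$: stage-wise disjointness caps $L$ at $2$, while $\alpha=4$). Passing from this to the product bound via the ``grid of pairs'' only relaxes a bound you have already proved, and it makes your achievability half impossible: assigning all $\alpha$ block-pairs at the bottleneck hop $n^\star$ to distinct paths forces $\lfloor M_{n^\star+1}/m\rfloor$ of those paths to reuse the same block at stage $n^\star$, so the constructed paths are \emph{not} pairwise independent under Definition 2. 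No amount of bookkeeping in the Hall-matching extension step can repair this; under the literal definition the value $\alpha$ is simply not attainable, and your two halves prove incompatible statements.

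What the lemma actually needs (and what the paper gets by citing Theorem 3 of \cite{Vaze2008jsptocode} with $M_n$ replaced by $\lfloor M_n/m\rfloor$) is statistical independence of the end-to-end products $\Pi_{k_0}^{k_N}$, i.e.\ that no two paths share a channel coefficient on any hop. That is the weaker per-hop condition ``$S_{k_n}\cap S_{l_n}=\emptyset$ \emph{or} $S_{k_{n+1}}\cap S_{l_{n+1}}=\emptyset$ for every $n$,'' equivalently that the paths occupy distinct block-edges at every hop once antennas are grouped into a fixed partition of $m$-blocks. Under that reading your bottleneck-plus-extension construction is the right idea and can be completed (e.g.\ by an explicit cyclic-shift assignment of blocks across stages), but you would then have to (i) state the independence notion you are actually using, (ii) rederive the upper bound from hop-wise coefficient-disjointness rather than stage-wise antenna-disjointness --- and note that for arbitrary $m$-subsets this bound is delicate: one can place five pairwise coefficient-disjoint $2\times2$ sub-blocks inside a $5\times5$ matrix, exceeding $\lfloor 5/2\rfloor^2=4$, so $\alpha$ is only the count for partition-respecting paths --- and (iii) actually exhibit the extension rather than asserting that a sequence of matchings will succeed. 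As written, the proposal does not establish the lemma.
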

\begin{proof} Follows directly from Theorem 3 \cite{Vaze2008jsptocode} by replacing $M_n$ by
$\left\lfloor\frac{M_n}{m}\right\rfloor$.
\end{proof}

Now we are ready to describe our JEEMAS strategy for the full-duplex
multi-hop relay channel.
To transmit the signal from the source to the destination, a single path in a
multi-hop relay channel is used for communication. How to choose that path is described in the
following. Let the chosen path for the transmission be
$\left(e^0_{{\cal S}_{k^*_0}{\cal S}_{k^*_1}}, e^1_
{{\cal S}_{k^*_{1}}{\cal S}_{k^*_{2}}}
, \ldots, e^{N-1}_
{{\cal S}_{k^*_{N-1}}
{\cal   S}_{k^*_{N}}}\right)$. Then the signal is transmitted from the
${\cal S}_{k^*_0}^{*th}$ subset of antennas
of the source and is relayed through ${\cal S}_{k^*_n}^{th}$ subset of antennas of
relay stage $n, \ n=1,2,\ldots N-1$ and decoded by the ${\cal S}_{k^*_N}^{th}$
subset of antennas of the destination.
Each antenna on the chosen path uses an AF strategy to
forward the signal to the next relay stage,
i.e. each antenna of stage $n$ on the chosen path transmits the received signal after multiplying by $\mu_n$, where $\mu_n$ is chosen to satisfy an average power constraint $P$ across $m$ antennas of stage $n$.


Therefore with AF by each antenna subset on the chosen path,
the received signal at the ${\cal S}_{k^*_N}^{th}$ subset of antennas of the destination at time $t+N$ of a multi-hop relay channel is
\begin{align}
\label{rxsig}\nonumber
\br_{t+N} &= \prod_{n=0}^{N-1}\sqrt{\frac{P\mu_n}{m}}\bH^n_{{\cal S}_{k^*_n}
{\cal S}_{k^*_{n+1}}}\bx_t & \\ \nonumber
& \ \ +  \sum_{j=1}^{t-1}
\sqrt{\frac{P\gamma_j}{m}}f_j\left(\bH^n_{{\cal S}_{k^*_n}
{\cal S}_{k^*_{n+1}}}\right)\bx_{t-j} &\\
& \ \  + \underbrace{\sum_{m=1}^{N-1}\prod_{l=m}^{N-1}\sqrt{\mu_l}
q_l\left(\bH^n_{{\cal S}_{l^*_n}
{\cal S}_{l^*_{n+1}}}\right)\bv_{{\cal S}_{l_n^*}} +
\bv_{{\cal S}_{k_N^*}}}_{\bz_{t+N}}&,
\end{align}
where $f_j\left(\bH^n_{{\cal S}^*_{k_n}
{\cal S}^*_{k_{n+1}}}\right)$ and
$q_l\left(\bH^n_{{\cal S}^*_{k_n}
{\cal S}^*_{k_{n+1}}}\right)$ are functions of
channel coefficients $\bH^n_{{\cal S}^*_{k_n}
{\cal S}^*_{k_{n+1}}}$,  $\mu_n$ ensures
that the power constraint at each stage is met,
$\gamma_j$ is a function of $\mu_n$'s, $\bv_{{\cal S}_{l_n^*}}, n=1,2,\ldots, N$ is the complex Gaussian noise with
zero mean and unit variance added at stage $n$ and
$\mu_0=1$. Since the destination has the CSI, accumulated noise $\bz_{t+N}$ is
white and Gaussian distributed. From hereon in this paper we assume that the accumulated noise at the destination for all the multi-hop relay channels is white Gaussian distributed without explicitly mentioning it. Let $(\bW)^{-1}$
be the covariance matrix of $\bz_{t+N}$, then by multiplying
$\bW^{\frac{1}{2}}$ to the received signal we have
\begin{eqnarray}
\nonumber
r^{'}_{t+N} &=& \bW^{\frac{1}{2}}\prod_{n=0}^{N-1}\sqrt{\frac{P\mu_n}{m}}
\bH^n_{ {\cal S}_{k_n^*} {\cal S}_{k_{n+1}^*}}\bx_t \\ \nonumber
&& +
\bW^{\frac{1}{2}}\sum_{j=1}^{t-1}
\sqrt{\frac{\gamma_jP}{m}}f_j\left(\bH^n_{{\cal S}_{k^*_n}{\cal S}_{k^*_{n+1}}}\right)\bx_{t-j}\\\label{rxsigtrans}
&&+ \bz'_{t+N}
\end{eqnarray} where $\bz^{'}_{t+N}$ is a matrix with ${\cal CN}(0,1)$ entries.
Note that $\bW$ is a function of channel coefficients $\bH^n_{{\cal S}^*_n{\cal S}^*_{n+1}}$.

We propose to use successive decoding at the destination with the
JEEMAS strategy, similar to \cite{Vaze2008jsptocode}. With successive
decoding, the destination tries to decode only $\bx_t$ at time $t+N,
\ t=1,2,\ldots,T, \ T\le T_c$ assuming that all the symbols
$\bx_1,\bx_2,\ldots,\bx_{t-1}$ have been decoded correctly. Assuming
that at time $t+N$ all the symbols $\bx_1,\bx_2,\ldots,\bx_{t-1}$
have been decoded correctly, the received signal (\ref{rxsigtrans})
can be written as
\begin{equation}
\label{succdec}
\br^{eq}_{t+N} = \bW^{\frac{1}{2}}\prod_{n=0}^{N-1}
\sqrt{\frac{P\mu_n}{m}}\bH^n_{ {\cal S}_{k_n^*}{\cal S}_{k_{n+1}^*}}\bx_t + \bz'_{t+N},
\end{equation} since the channel coefficients $\bH^n_{{\cal S}^*_n{\cal S}^*_{n+1}}$ are known at the destination.
Let the  probability of error in decoding  $\bx_t$ from (\ref{succdec}) be
$P_t$, then the probability of error $P_e$ in decoding $\bx_1,\bx_2,\ldots,
\bx_T$ from  (\ref{rxsigtrans}) with successive decoding $P_e$ is
\begin{eqnarray}
\label{errprobsuccdec}
P_e &\le& 1-\prod_{t=1}^T(1-P_t) \\\nonumber
 & \expl&P_t  \ \text{for any} \ t, \ t=1,\ldots,T,
 \end{eqnarray}
where the last equality follows from \cite{Vaze2008jsptocode}.

From (\ref{succdec}) it is clear that $P_t$ is the same for any $t, \
t=1,2,\ldots,T$, since the channel coefficients $\bH^n_{ {\cal S}_{k_n^*}{\cal S}_{k_{n+1}^*}}$ do not change for $T\le T_c$ time instants.
Therefore without loss of generality we compute an
upper bound on $P_1$ to upper bound $P_e$. Next, we describe our
JEEMAS strategy and compute an upper bound on $P_1$ of the
JEEMAS strategy to evaluate its DM-tradeoff. Let $\SNR \bydef \frac{P}{m}\prod_{n=0}^{N-1}\mu_n$.
Let $\Pi_{k_0}^{k_N} = \prod_{n=0}^{N-1}\bH^n_{ {\cal S}_{k_n}
{\cal S}_{k_{n+1}}}$, then the mutual information of
path $\left(e^0_{{\cal S}_{k_0}{\cal S}_{k_1}},
e^1_{{\cal S}_{k_1}{\cal S}_{k_2}}, \ldots, e^{N-1}_{{\cal S}_{k_{N-1}}{\cal
S}_{k_N}}\right)$ is
\begin{align*} 
&M.I.\left(\bW^{\frac{1}{2}}
\Pi_{k_0}^{k_N}\right) \bydef&  \\ 
& \ \ \ \log\det\left(\bI_{m} + \SNR \  \bW^{\frac{1}{2}}
\Pi_{k_0}^{k_N}\Pi_{k_0}^{k_N \dag}\bW^{\frac{1}{2}\dag}\right).
\end{align*}

Then the JEEMAS strategy chooses the path
that maximizes the mutual information at the destination, i.e. it chooses
path $(e^0_{{\cal S}_{k_0^*}{\cal S}_{k_1^*}},
e^1_{{\cal S}_{k_1^*}{\cal S}_{k_2^*}}, \ldots, e^{N-1}_{{\cal S}_{k_{N-1}^*}{\cal S}_{k_N^*}})$, if
\begin{align*}
\label{maxcriteria}
&{\cal S}_{k_0^*}, {\cal S}_{k_1^*}, {\cal S}_{k_{N-1}^*}, {\cal S}_{k_N^*} = & \\
& \ \ \ \ arg \max_{\begin{array}{c}
{\cal S}_{k_n} \subset [M_n], \\
\ n\in\{0,1,\ldots,N\}
\end{array}}M.I.\left(\bW^{\frac{1}{2}}
\Pi_{k_0}^{k_N}\right). &\end{align*}
Thus defining $\Pi^* = \prod_{n=0}^{N-1}
\bH^n_{{\cal S}_{k_n^*}{\cal S}_{k_{n+1}^*}}$, the mutual information
of the chosen path is
\begin{align*}
&M.I.\left(\bW^{\frac{1}{2}}
\Pi^*\right) \bydef& \\
&  \ \  \log\det\left(\bI_{m} + \SNR\bW^{\frac{1}{2}}
\Pi^*\Pi^{*\dag}\bW^{\frac{1}{2}\dag}\right).&\end{align*}
Since we assumed that the destination of the multi-hop relay channel
has CSI for all the channels in the receive mode, this optimization
can be done at the destination and using a feedback link, the source
and each relay stage can be informed about the index of antennas to
use for transmission. Next, we evaluate the DM-tradeoff of the
JEEMAS strategy by finding the exponent of the outage probability
(\ref{succdec}).

From \cite{Zheng2003} we know that $P_1\expeq P_{out}(r\log \SNR)$, where  $P_{out}(r\log \SNR)$ is the outage probability of (\ref{succdec}). Therefore it is sufficient to compute an upper bound on the outage probability of (\ref{succdec}) to upper bound $P_e$.
With the proposed EEAS strategy, the outage probability of
(\ref{succdec}) can be
 written as
\begin{align}
\nonumber
P_{out}(r\log \SNR)=
P\left(M.I.\left(\bW^{\frac{1}{2}}
\Pi^*\right)\le r\log \SNR\right).
\end{align}
From \cite{Yang2007a, Sreeram2008} $\bW^{\frac{1}{2}}$ can be dropped from the
DM-tradeoff analysis without changing the outage exponent,
since $ \lambda_{max}\left(\bW^{\frac{1}{2}}\right) \expeq  \lambda_{max}\left(\bW^{\frac{1}{2}}\right)\expeq \SNR^0$ \cite{Yang2007a}, i.e. the maximum or the minimum eigenvalue of $\bW^{\frac{1}{2}}$ do not scale with $\SNR$.
Thus,
\begin{align}
P_{out}(r\log \SNR) \expeq  
P\left(M.I.\left(
\Pi^*\right)
\le
r\log\SNR\right).
\end{align}
We first compute the DM-tradeoff of the JEEMAS strategy for the case when there
exists $\alpha_n$ such that
$M_n = \alpha_n m, \ \forall \ n=0,1,\ldots,N$, and then for the general case.

If $M_n = \alpha_n m, \ \forall \ n=0,1,\ldots,N$, then by Lemma \ref{lem:maxindepaths}, the
total number of independent paths in a multi-hop relay channel is
$\kappa\bydef\min_{n=0,1,\ldots,N-1}\{\alpha_n\alpha_{n+1}\}$. Thus,
\begin{eqnarray*}
P_{out}(r\log \SNR) \le 
\left(
P
\left(M.I.\left(\Pi^{k_N}_{k_0}\right)
\le r\log\SNR \right)
\right)
^{\kappa},
\end{eqnarray*}
since from (\ref{maxcriteria}) 
$M.I.\left(\Pi^{*}\right)
\ge M.I.\left(\Pi^{k_N}_{k_0}\right)$
for any $\Pi_{k_0}^{k_N}$.

From \cite{Yang2007a}
\begin{align}
 \label{dmtexpsym}
P\left(M.I.\left(\Pi^{k_N}_{k_0}\right)
\le
r\log\SNR\right)\expeq \SNR^{-d^N_{m}(r)},&
\end{align}
where
\begin{eqnarray*}
d_m^N(r) &=& \frac{(m-r)(m+1-r)}{2} \\ &&+ \frac{a(r)}{2}\left((a(r)-1)N+2b(r)\right),
\end{eqnarray*}
where $a(r) \bydef \left \lfloor\frac{m-r}{N}\right \rfloor $, and
$b(r) \bydef (m-r)  \ mod  \ N$.
Thus, $P_{out}(r\log \SNR) \le \SNR^{-\kappa d_m^N(r)} $ and
the DM-tradeoff of the JEEMAS strategy is given by
\begin{eqnarray*}d(r) = \kappa d_m^N(r).\end{eqnarray*}

For the general case when $M_n \ne \alpha_n m, \ \forall \
n=0,1,\ldots,N$, let $M_n = \alpha_n m + \beta_n, \ \beta_n \le m$,
for some $\alpha_n$ and $\beta_n$. Then partition the multi-hop
relay channel into two parts, the first partition ${\cal P}_1$
containing $\alpha_n m$ antennas of each stage, such that the chosen
set of antennas by the JEEMAS strategy ${\cal S}_{k_n^*} \subset {\cal
P}_1, \ \forall \ n$, and the second partition ${\cal P}_2$
containing the rest $\beta_n$ antennas of each stage. By reordering
the index of antennas, without loss of generality, let ${\cal P}_1$
contain antennas $1$ to $\alpha_n m$ of each relay stage and ${\cal
P}_2$ contain antennas $\alpha_n m +1$ to $\alpha_n m +\beta_n$ of
stage $n$. Recall that the JEEMAS strategy chooses those $m$
antennas of each stage that have the maximum mutual information at
the destination. Thus,
\begin{align}
\nonumber
&P_{out}(r\log \SNR) = &\\ \nonumber
&\ \ \ \ P\left(\max_{{\cal S}_{k_n} \subset [M_n]}
M.I.\left(\Pi^{k_N}_{k_0}\right)\le
r\log\SNR\right) & \\ \nonumber
&\le  P\left(\max_{{\cal S}_{k_n} \subset [\alpha_n m]} 
M.I.\left(\Pi^{k_N}_{k_0}\right)\le
r\log\SNR\right., & \\
&\ \ \ \ \ \  \ \ \ \ \ \ \ \ \ \  \ \ \ \left.M.I.\left(\Pi_{last}\right)
\le
r\log\SNR\right) &
\label{poutindpaths}
,
\end{align}
where $\Pi_{last} = \prod_{n=0}^N\bH^{n}_{{\cal S}^{last}_{ n}{\cal S}^ {last}_{n+1}}$, and
$\bH^{n}_{{\cal S}^{last}_{ n}{\cal S}^ {last}_{n+1}}$ is the $m \times m$ channel matrix
between $M_n-m+1$ to $M_n$ antennas of stage $n$ and $M_{n+1}-m+1$ to $M_{n+1}$ antennas of stage
$n+1$.  Note that the channel coefficients in $\Pi_{last} $ are not independent of the channel
coefficients in $\Pi_{k_0}^{k_N}, \ {\cal S}_{k_n} \subset [\alpha_n m]$, and therefore we cannot write $P_{out}(r\log \SNR) $ as the product of
 \[P\left(\max_{{\cal S}_{k_n} \subset [\alpha_n m]}M.I.\left(\Pi^{k_N}_{k_0}\right)\le r\log \SNR\right)\] and 
 \[P \left(M.I.\left(\Pi_{last}\right)
\le
r\log\SNR\right).\]
To circumvent this problem, let $\Pi_{{\cal P}_2} =
\bH^{0}_{{\cal S}^{last}_{0}\beta_1}\bH^1_{\beta_1 {\cal S}^ {last}_{n+1}} \ldots \bH^{N-1}_{{\cal S}^{last}_{N-1}\beta_{N}}$,
where $\bH^n_{{\cal S}^{last}_{n}\beta_{n+1}}$ is the channel matrix between
the last $m$ antennas of stage $n$ and the last $\beta_{n+1}$ antennas of stage $n+1$ of partition ${\cal P}_2$, and $\bH^n_{\beta_{n}{\cal S}^{last}_{n+1}}$
is the channel matrix between the last $\beta_n$ antennas of stage $n$ and the
last $m$ antennas of stage $n+1$ of partition ${\cal P}_2$.
Basically we pick $m$ and $\beta_n$ antennas alternatively from each stage,
such that the channel coefficients in $\Pi_{{\cal P}_2} $ are independent of channel coefficients in $
\Pi_{k_0}^{k_N}, \ {\cal S}_{k_n} \subset [\alpha_n m]$.
Note that $\Pi_{{\cal P}_2}$ uses a subset of antennas of $\Pi_{last}$, and
since outage probability decreases by using more antennas of each
stage \footnote{Use of more antennas increases the mutual information
of the channel, and consequently reduces the outage probability.}, from (\ref{poutindpaths})
\begin{align*}
&P_{out}(r\log \SNR)\le  & \\
& \ \ \ \ P\left(\max_{{\cal S}_{k_n} \subset [\alpha_n m]} M.I.\left(\Pi^{k_N}_{k_0}\right)\le
r\log\SNR, \right. &  \\ 
& \ \ \ \ \ \ \ \ \  \ \ \ \ \  \ \ \ \ \ \left. M.I.\left(\Pi_{{\cal P}_2}\right) \le r\log\SNR\right).&
\end{align*}
Since the channel coefficients in $\Pi_{{\cal P}_2} $ are independent of
the channel coefficients of $\Pi_{k_0}^{k_N}, \ {\cal S}_{k_n} \subset [\alpha_n m]$,
\begin{align*}
&P_{out}(r\log \SNR)\le  & \\
& \ \ \ \ P\left(\max_{{\cal S}_{k_n} \subset [\alpha_n m]} M.I.\left(\Pi^{k_N}_{k_0}\right)\le
r\log\SNR, \right) \times &  \\ 
&\ \ \ \ P\left( M.I.\left(\Pi_{{\cal P}_2}\right) \le r\log\SNR\right).&
\end{align*}

Therefore, 
\begin{align*}
&P_{out}(r\log \SNR)\le  & \\
& \ \ \ \ P\left( M.I.\left(\Pi^{k_N}_{k_0}\right)\le
r\log\SNR, \right)^{\kappa} \times&  \\ 
& \ \ \ \  P\left( M.I.\left(\Pi_{{\cal P}_2}\right) \le r\log\SNR\right)&
\end{align*}
since the number of independent paths in partition ${\cal P}_1$ are
$\kappa$.

From \cite{Yang2007a},
$P\left( M.I.\left(\Pi_{{\cal P}_2}\right) \le r\log\SNR\right)
 = \SNR^{-(d_{m,\beta_1, m,\ldots,m,\beta_{N}}(r))},$
where
\begin{eqnarray*}d^N_{m,\beta_1,m,\ldots,m,\beta_{N}}(r)&=&
\sum_{k=r+1}^{\beta_{min}} 1-k \\ 
&&+ \min_{n=1,\ldots,N} \left\lfloor \frac{\sum_{l=0}^n{\hat \beta}_l-k}{n}\right\rfloor,
\end{eqnarray*}
$r=0,1,\ldots,\min\{\beta_1, \ldots, \beta_N, m\}$, 
where $\beta_{min} \bydef \min\{\beta_1, \beta_3, \ldots, \beta_{N}\}$ and
$\left\{{\hat \beta}_0, {\hat \beta}_1, \ldots, {\hat \beta}_N\right\}$
is the non-decreasing ordered version of $\left\{m,\beta_1, m\ldots, m,\beta_{N}\right\}$, ${\hat \beta}_0 \le {\hat \beta}_1 \le \ldots \le {\hat \beta}_N$.
Thus,
\begin{eqnarray*}
P_{out}(r\log \SNR)
&\le & \SNR^{-\left(\kappa d_m^N(r) + d^N_{m,\beta_1,m\ldots,m,\beta_{N}}(r)\right)}.
\end{eqnarray*}
Therefore, using (\ref{dmtexpsym}), the DM-tradeoff of the JEEMAS strategy is
\begin{eqnarray}
\label{dmtexp}d(r) = \kappa d_m^N(r) + \left[d^N_{m,\beta_1,m\ldots,\beta_{N-1},m}(r)\right]^{+},
\end{eqnarray}
$r=0,1,\ldots,\min_{n=0,1,\ldots,N} \{M_n\}$.

Recall that in the JEEMAS strategy the design parameter is $m$, the number
of antennas to use from each stage.
To obtain the best lower bound on the DM-tradeoff of JEEMAS strategy
one needs to find out the optimal value of $m$.
From (\ref{dmtexp}), it follows that using a single antenna $m=1$,
maximum diversity gain point can be achieved.
Similarly, choosing $m=\min_{n=0,\ldots,N} M_n$,
the maximum multiplexing gain point can also be achieved.
For intermediate values of $r$, however, it is not apriori clear what value of
$m$ maximizes the diversity gain. After tedious computations it turns out that
choosing $m=\min_{n=0,\ldots,N} M_n$ provides with the best
achievable DM-tradeoff for $r>0$. Thus, we propose a hybrid JEEMAS strategy,
where for $r=0$ use $m=1$, and for $r>0$ use $m=\min_{n=0,\ldots,N} M_n$.
Our approach is similar to \cite{Sreeram2008}, where for each $r$
an optimal partition of the multi-hop relay channel is found by solving
an optimization problem. We compare the achievable DM-tradeoff
of our hybrid JEEMAS strategy and the strategy of \cite{Sreeram2008} for
$M_0=2,M_1=4,M_2=2$ and $M_0=3,M_1=5,M_2=3$ in Figs. \ref{dmt242} and \ref{dmt353}.
\begin{figure}
\centering
\includegraphics[width= 3.5in]{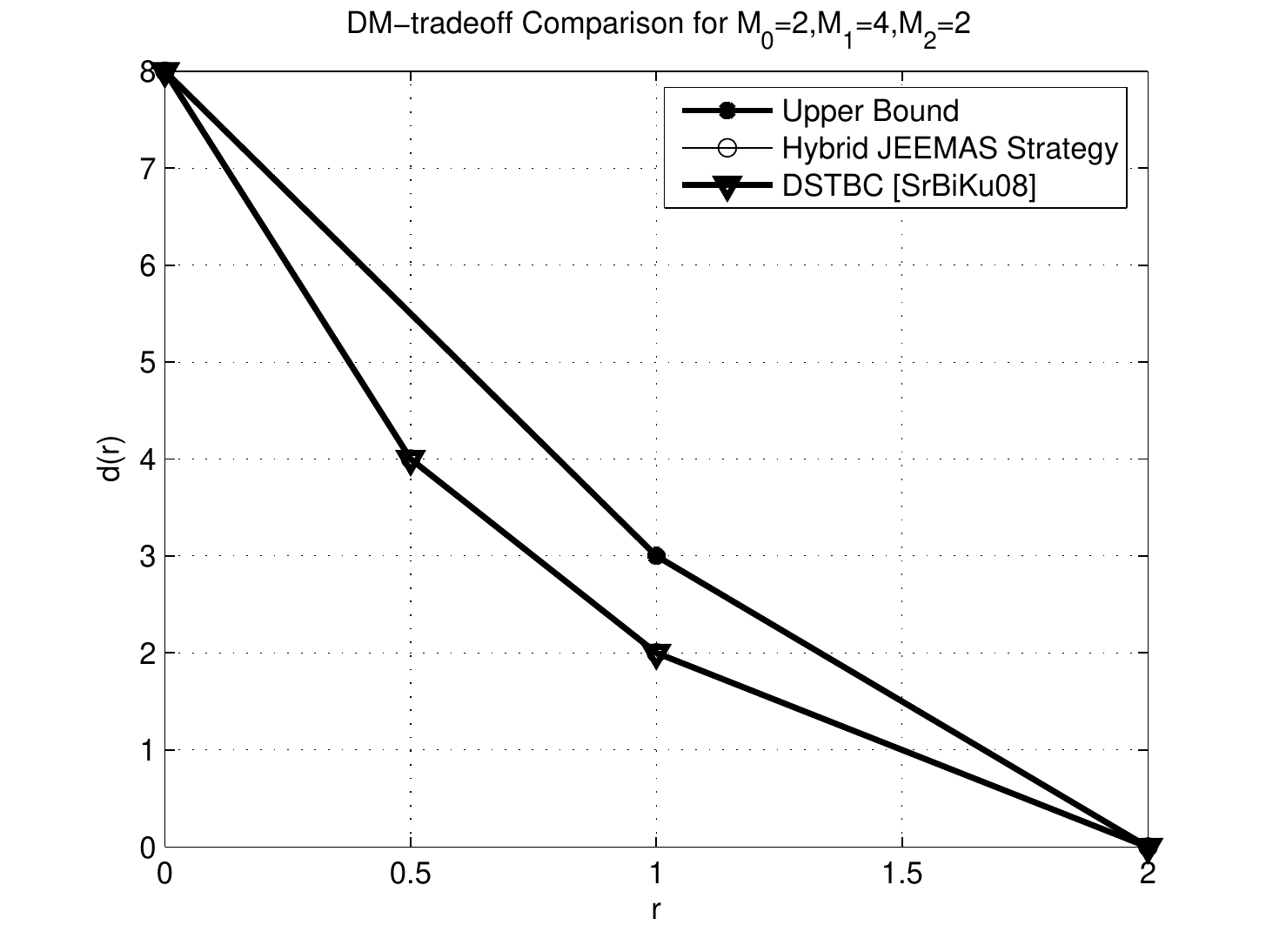}
\caption{DM-tradeoff comparison of hybrid JEEMAS with the strategy of \cite{Sreeram2008}}
\label{dmt242}
\end{figure}
\begin{figure}
\centering
\includegraphics[width= 3.5in]{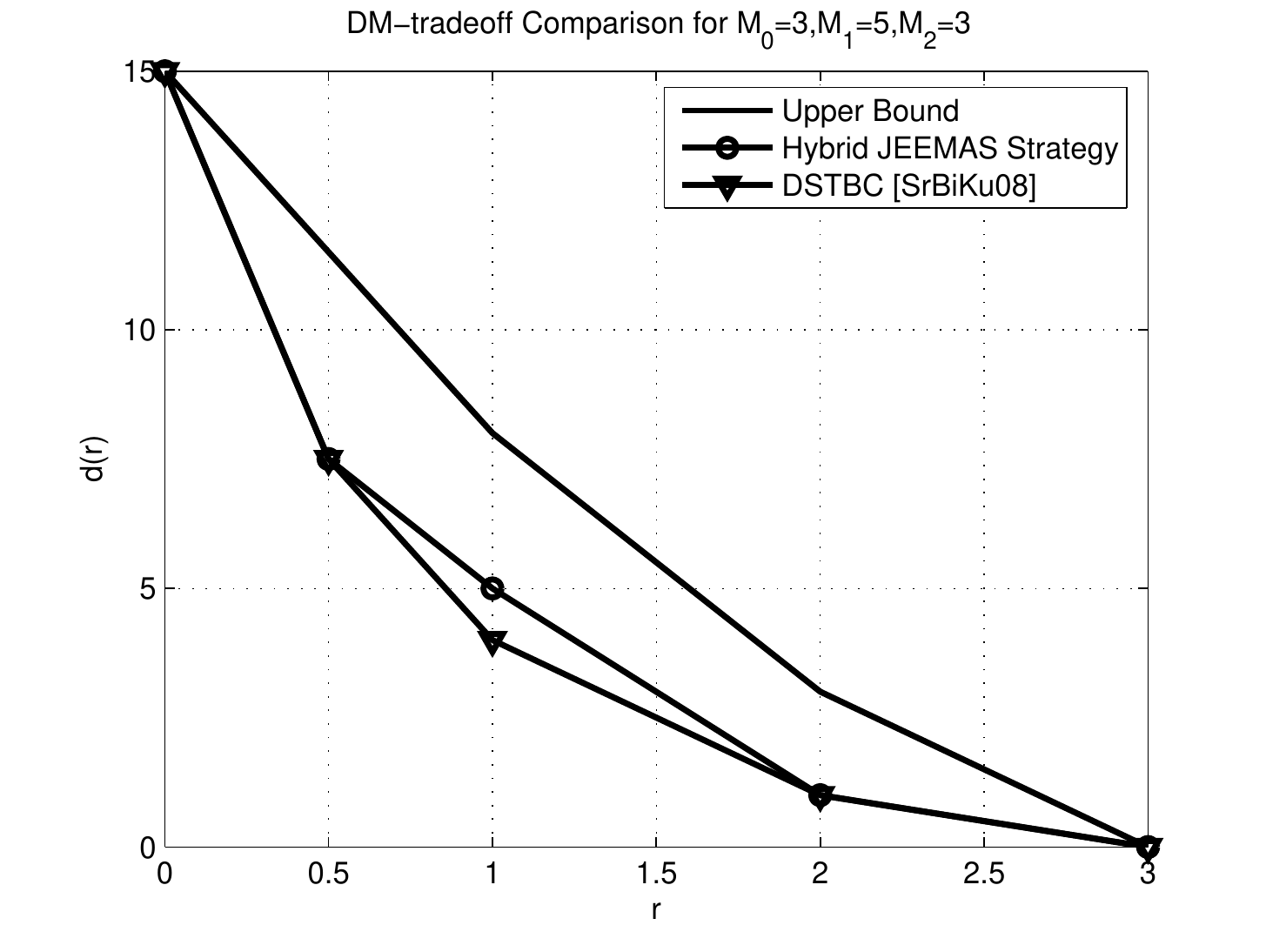}
\caption{DM-tradeoff comparison of hybrid JEEMAS with the strategy of \cite{Sreeram2008} }
\label{dmt353}
\end{figure}

For the case when $\beta_n=0, \ \forall \ n$, the achievable DM-tradeoff of
our hybrid JEEMAS strategy matches with that of the partitioning strategy of
\cite{Sreeram2008}. For the case when $\beta_n\ne0, \ \forall \ n$,
however, it is difficult to compare the hybrid
JEEMAS strategy with the strategy of \cite{Sreeram2008} in terms of
achievable DM-tradeoff, since an optimization problem has to be solved for
the strategy of \cite{Sreeram2008}. For a particular example of $N=2, M_0=3,
M_1=5, M_2=3$ the hybrid JEEMAS strategy outperforms the strategy of
\cite{Sreeram2008} as illustrated in Fig. \ref{dmt353}.
Moreover, in \cite{Sreeram2008} a new partition is required for each $r$,
in contrast to our strategy, which has only two modes of operation,
one for $r=0$ and the other for $r>0$.

The following remarks are in order.
\begin{rem} Recall that we  assumed that $|S_{k_n}| = m$; i.e.
equal number of antennas are selected at each relay stage. The justification
of this assumption is as follows.
Let us assume that $M_n, \ n=0,1,\ldots,N$ antennas are used from each
relay stage.
Now assume that all relay stages are using the same number of antennas $M_n=m, \ \forall \ n, \ n \ne l$,
except $l$, which is using $k$ antennas, $M_l=k$, and $m\ne k$.
Using (\ref{dmtexp}), it can be shown that the achievable DM-tradeoff with
$M_n=m, \forall \ n, n \ne l$, and $M_l=k$ is a subset of the union of the
achievable DM-tradeoff's with using $M_n=m, \forall \ n$ (all relay stages using $m$ antennas), and
$M_n=k, \forall \ n$ (all relay stages using $k$ antennas).
Thus, it is sufficient to consider
same number of antennas from each relay stage.
It turns out, however, that different values of $m$ provide with
different achievable DM-tradeoff's because of the different number of
independent paths in the multi-hop relay channel.
To optimize over all possible values of $m$ we keep $m$ as a variable,
and choose $m$ to obtain the best achievable DM-tradeoff.

\end{rem}
\begin{rem}
\label{dmtpp}
Using the DM-tradeoff analysis of the JEEMAS strategy, we can obtain the DM-tradeoff of an antenna
selection strategy for the point to point MIMO channel by considering a
multi-hop relay channel with $N=0$, $M_t$ transmit, and $M_r$ receive antennas
such that
($M_t\ge M_r$). Surprisingly we could not find this result in the literature
and provide it here for completeness sake.
Let $M_t = \alpha M_r + \beta$, and the transmitter uses $M_r$ antennas
out of $M_t$ antennas that have maximum mutual information at the destination,
then the DM-tradeoff  is given by
\[d(r) = \alpha(M_r-r)(M_r-r) + \left[(\beta-r)(M_r-r)\right]^{+},\] 
$r=0,1,\ldots,M_r$. 
The proof follows directly
from (\ref{dmtexp}).
\end{rem}

\begin{rem}
\label{rem:csi} CSI Requirement:
With the proposed hybrid JEEMAS strategy, the destination needs to
feedback the index of the path with the maximum mutual information to
the source and each stage. Recall from the derivation of
the achievable DM-tradeoff of the JEEMAS strategy that only $\kappa$ paths
in a multi-hop relay channel are independent, and control
the achievable DM-tradeoff for $\beta_n=0, \ \forall \ n$. Thus, the destination
only needs to feedback the index of the best path among $\kappa$ independent paths with the maximum mutual information. Consequently the destination only needs
to know CSI for $\kappa$ paths. For the case when $\beta_n\ne0, \ \forall \ n$,
we need to consider one more path from partition ${\cal P}_2$ corresponding
to $m$ and $\beta_n$ antennas of alternate relay stages.
Thus, the CSI overhead is moderate for the proposed EEAS strategy.
\end{rem}

\begin{rem} Feedback Overhead:
As explained in Remark \ref{rem:csi}, to obtain the achievable DM-tradeoff
of the hybrid JEEMAS strategy it is sufficient to consider any one set of
$\kappa$ or $\kappa +1$ independent paths.
Let the destination choose a particular set $S$ of
$\kappa+1$ independent paths.
Then each relay node knows on which of the paths of $S$ it lies and depending
on the index of the element of $S$ from the destination, it knows whether to transmit or remain silent. Thus, only $\log_2(\kappa+1)$
bits of feedback is required from the destination to the source and each
stage. Therefore the feedback overhead with the proposed EEAS strategy is
quite small and can be realized with a very low rate feedback link.

\end{rem}

{\it Discussion:} In this section we proposed a hybrid JEEMAS
strategy that has two modes of operation, one for $r=0$, where it
uses a single antenna of each stage, and the other for $r>0$, that
uses $\min_{n=0,\ldots,N} M_n$ antennas of each stage. The proposed
strategy is shown to achieve both the corner points of the optimal
DM-tradeoff curve, corresponding to the maximum diversity gain and
the maximum multiplexing gain. For intermediate values of
multiplexing gain, the diversity gain of our strategy is quite close
to that of the upper bound. Even though our strategy does not meet the upper bound, we
show that it outperforms the best known DSTBC strategy
\cite{Sreeram2008}, with smaller complexity and possess several
advantages over DSTBCs as described in \cite{Vaze2008jsptocode}. In the next
section we propose a distributed CF strategy to achieve the optimal
DM-tradeoff of the $2$-hop relay channel.

\section{Distributed CF Strategy for $2$-Hop Relay Channel}
\label{sec:2hop} In this section we consider  a $2$-hop
relay channel with multiple relay nodes
in the presence of a direct path between the source
and the destination.
For this $2$-hop relay channel we propose a distributed compress and forward (CF)
strategy to achieve the optimal DM-tradeoff.
The signal model for this section is as follows. We
consider a $2$-hop relay channel with $K$ relay nodes, where the
$k^{th}$ relay has $m_k$ antennas, and $\sum_{k=1}^{K}m_k=M_1$. The source
and destination are assumed to have $M_0$ and $M_2$ antennas, respectively.
We assume that the source and each relay have an average power constraint of
$P$ \footnote{Different transmit power constraints do not change the
DM-tradeoff.}.
Let the signal transmitted from the source be $\bx$, and from the relay node $k$ be
$\bx_k$, respectively.
Then,
\begin{eqnarray}
\nonumber
\by &=& \sqrt\frac{{P}}{M_0}\bH_{sd}\bx + \sum_{k=1}^K\sqrt\frac{{P}}{m_k}
\bG_k\bx_k + \bn, \\
\by_{k} &=& \sqrt\frac{{P}}{M_0}\bH_{k}\bx + \sum_{\ell=1, k\ne \ell}^K
\sqrt\frac{{P}}{m_{\ell}}\bF_{k\ell}\bx_{\ell}
+ \bn_k, \label{rxsigdmtfd}
\end{eqnarray}
where $\by$ is the received signal at the destination, and $\by_k$ is the
signal received at relay $k$.

Previously in \cite{Yuksel2007}, the CF strategy of \cite{Cover1979} has been
shown to achieve the optimal DM-tradeoff of a $2$-hop relay channel with a single relay node
($K=1$) in the presence of direct path between the source and the destination.
The result of \cite{Yuksel2007}, however, does not generalize to the case of
$2$-hop relay channel with multiple relay nodes. The problem with multiple relay
nodes is unsolved, since how multiple relay nodes should cooperate among themselves to
help the destination decode
the source message is hard to characterize.
A compress and forward (CF)  strategy for a $2$-hop relay channel with multiple relay nodes has been
proposed in \cite{Kramer2005}, which involves
partial decoding of other relays messages at each relay and
transmission of correlated
information from different relay nodes to the destination using
distributed source coding. The achievable rate expression obtained in
\cite{Kramer2005}, however, is quite complicated and cannot be computed easily
in closed form.

The
achievable rate expression of the CF strategy \cite{Kramer2005} is complicated because each relay node
partially decodes all other relay messages. Partial decoding
introduces auxillary random variables which are hard to optimize
over. To allow analytical tractability, we simplify the strategy of
\cite{Kramer2005} as follows. In our strategy each relay compresses
the received signal from the source using Wyner-Ziv coding similar
to \cite{Kramer2005}, but without any partial decoding of any other
relay's message. The compressed message is then transmitted to the
destination using the strategy of transmitting correlated messages
over a multiple access channel \cite{Cover1980}. Our strategy is a
special case of CF strategy \cite{Kramer2005}, since in our case the
relays perform no partial decoding. Consequently our strategy leads
to a smaller achievable rate
 compared to \cite{Kramer2005}. The biggest advantage of our strategy, however,
is its easily computable achievable rate expression and its sufficiency in achieving the
optimal DM-tradeoff as shown in the sequel. We refer to our strategy as distributed CF
from hereon in the paper. Even though the relays
do not perform any partial decoding in the distributed CF strategy,
in the sequel we show that
they still provide the destination with enough information
about the source message to achieve the optimal DM-tradeoff.
Before describing our distributed CF strategy and showing its optimality in achieving the
optimal DM-tradeoff, we present an upper bound on the DM-tradeoff of the $2$-hop relay channel.
\begin{figure*}
\begin{equation*}L_{s} = \det\left(\frac{P}{M_0}\bH_{s}^{d}\bH_{s}^{d\dag} +
\left[\begin{array}{cccc}
\bI_{M_2} & 0 & 0 & 0 \\
0 & \left({\hat N}_1+1\right)\bI_{m_1} & 0 & 0\\
0 & 0 & \ddots & 0  \\
0 & 0 & 0 & \left({\hat N}_K+1\right)\bI_{m_K} \end{array}\right]\right),
\end{equation*}
\end{figure*}
\begin{lemma} \cite{Yang2007a}
\label{dmtupboundfd} The DM-tradeoff of a two-way relay channel is upper
bounded by
\begin{eqnarray*}
d(r) &\le& \min\{(M_0-r)(M_1+M_2-r), \\
&&(M_0+M_1-r)(M_2-r)\},
\end{eqnarray*}
$r=0,1,\ldots,\min\{M_0, M_1+M_2, M_0+M_1, M_2\}$.
\end{lemma}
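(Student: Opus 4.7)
The plan is to apply the cut-set bound from network information theory together with the point-to-point MIMO DM-tradeoff result recalled in (\ref{dmtmimo}), exactly as was done for the multi-hop upper bound in Lemma \ref{upbounddmt}. For a two-hop relay channel with a direct source-destination link and $K$ relay nodes whose antenna counts sum to $M_1$, there are essentially two non-trivial cuts separating the source from the destination: the cut $\cC_1$ that isolates the source (grouping all $K$ relays with the destination), and the cut $\cC_2$ that isolates the destination (grouping all $K$ relays with the source).

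First I would invoke the cut-set bound to assert that the end-to-end mutual information is upper bounded by the minimum of the mutual information across $\cC_1$ and $\cC_2$. Then, allowing full cooperation on each side of every cut (which only inflates the bound), $\cC_1$ becomes a point-to-point MIMO channel with $M_0$ transmit antennas and $M_1+M_2$ receive antennas, while $\cC_2$ becomes a point-to-point MIMO channel with $M_0+M_1$ transmit antennas and $M_2$ receive antennas. Since all channel entries are i.i.d.\ $\cC\cN(0,1)$, each cooperative cut is exactly a Rayleigh-fading MIMO link of the kind analyzed in \cite{Zheng2003}.

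Next, I would invoke (\ref{dmtmimo}) separately on each cut. The outage probability for $\cC_1$ at multiplexing gain $r$ decays as $\SNR^{-(M_0-r)(M_1+M_2-r)}$, and for $\cC_2$ as $\SNR^{-(M_0+M_1-r)(M_2-r)}$, for $r$ in the respective admissible ranges. Since any achievable scheme must avoid outage on both cuts simultaneously, the overall outage exponent — and hence $d(r)$ — is bounded above by the minimum of the two exponents, yielding precisely the claimed piecewise expression. The admissible range $r=0,1,\ldots,\min\{M_0,M_1+M_2,M_0+M_1,M_2\}$ is just the intersection of the admissible ranges of the two individual MIMO DM-tradeoff curves.

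I do not anticipate any real obstacle: the only substantive step is justifying that the cut-set argument, plus cooperation, plus exponential equality under outage, together imply the stated bound on $d(r)$. The one minor care needed is that the cut-set bound is usually stated for the capacity of an ergodic channel, so I would work with the outage formulation as in Lemma \ref{upbounddmt}, noting that the probability of outage at rate $r\log\SNR$ at the destination is lower bounded by the probability of outage across either cooperative cut, and then pass to the DM-tradeoff via $P_e(\SNR)\expeq P_{\text{out}}(r\log\SNR)$.
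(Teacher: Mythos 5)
Your proposal is correct and follows essentially the same route as the paper: the paper's proof is exactly the two-cut cooperation argument (co-locating all relays with the destination to get an $M_0 \times (M_1+M_2)$ MIMO channel, then co-locating them with the source to get an $(M_0+M_1)\times M_2$ MIMO channel) followed by the point-to-point DM-tradeoff of \cite{Zheng2003} and taking the minimum. Your explicit invocation of the cut-set bound and the outage formulation is just a more formal phrasing of the paper's ``cooperation can only improve $d(r)$'' step.
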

\begin{proof}
Let us assume that all the relay nodes and the destination
are co-located and can cooperate perfectly. This assumption can only improve
$d(r)$. In this case, the communication model from the source to destination
is a point to point MIMO channel with $M_0$ transmit antennas and $M_1+M_2$
receive antennas. The DM-tradeoff of this MIMO channel is $(M_0-r)(M_1+M_2-r)$, and since this point to point MIMO channel is better than our original $2$-hop
relay channel, $d(r) \le (M_0-r)(M_1+M_2-r)$.
Next, we assume that the source is co-located with all the relay nodes and can
cooperate perfectly for transmission to the destination.
This setting is equivalent to a MIMO channel with $M_0+M_1$ transmit and
$M_2$ receive antenna with DM-tradeoff $(M_0+M_1-r)(M_2-r)$. Again,
this point to point MIMO channel is better than our original $2$-hop relay
channel and hence $d(r) \le (M_0+M_1-r)(M_2-r)$,
which completes the proof.
\end{proof}

To achieve this upper bound we propose the following distributed CF strategy.
Let the rate of transmission from source to destination be $R$. Then
the source generates $2^{nR}$ independent and identically
distributed $x^n$ according to distribution $p(x^n) =
\prod_{i=1}^np(x_{i})$. Label them $x(w), \ w \in [2^{nR}]$. The
codebook generation, the relay compression and transmission remains the
same as in \cite{Kramer2005}, expect that no relay node decodes any other
relay's codewords, i.e. no partial decoding at any relay node.
Relay node $k$ generates $2^{nR_{k}}$ independent
and identically distributed $x_k^n$ according to distribution
$p(x_k^n) = \prod_{i=1}^np(x_{ki})$ and labels them $x_k(s), \ s\in
[2^{nR_k}]$, and for each $x_k(s)$ generates $2^{n{\hat R}}$
${\hat y}_k$'s, each with probability $p({\hat y}_k|x_k(s)) =
\prod_{i=1}^np({\hat y}_{ki}|x_{ki}(s)) $. Label these ${\hat
y}_k(z_k|s), s \in [2^{nR_k}]$ and $z_k \in [2^{n{\hat R}_k}]$ and
randomly partition the set $ [2^{n{\hat R}_k}]$ into $2^{nR_k}$
cells $S_{s}, \ s\in [2^{nR_k}]$.

{\it Encoding:} A Block Markov encoding \cite{Cover1979} together with Wyner-Ziv coding \cite{Wyner1976} is used by each relay.
Let in block $i$ the message to send from the source be $w_i$,
then the source sends $x(w_i)$. Let the signal received by relay $k$ in block
$i$ be $y_k(i)$. Then $y_k(i)$ is compressed to ${\hat y}_k(z_{ik})$ using
Wyner-Ziv coding \cite{Wyner1976} where correlation among
$y_1,\ldots,y_K$ is exploited. Then relay $k$ determines the cell
index $s_{ik}$ in which $z_{ik}$ lies and transmits $x_k(s_{ik})$ in block $i+1$.
We consider transmission of $B$ blocks of $n$ symbols each from the
source in which $B-1$ messages will be sent.
Each message is chosen from $w \in [2^{nR}]$.
Thus, as $B \rightarrow \infty$, for fixed $n$,
rate $R\left(\frac{B-1}{B}\right)$ is arbitrarily close to $R$ \cite{Cover1979}.
In the first block, the relay has no information about $s_{0k}$ necessary for
compression. In this case, however, any good sequence allows each relay to
start block Markov encoding \cite{Cover1979}. In the last block, the source
is silent and only the relays transmit to destination.

{\it Decoding:} Backward decoding is employed at the destination.
At the end of block $i$, the codeword sent by source in block $i-1$ is
decoded. At the end of block $i$, the destination first decodes $x_k$ for each
$k$ by looking for a jointly typical $x_k(s_{ik})$ and $y_i$. If $R_k
\le I(\bx_k;\by | \bx_{[K]/ k})$, $x_k(s_{ik})$ can be decoding reliably.
Next, given
that $x_k$'s have been decoded correctly for each $k$, the destination
tries to find a set ${\cal L}$ of $z_1, \ldots, z_K$ such that
$\left( x_1(s_1), \ldots, x_K(s_K), {\hat y_1}(z_1|s_1), \ldots,
{\hat y_K}(z_K|s_K), y\right)$ is jointly typical. The destination declares
that $z_1, \ldots, z_K$ were the correctly sent codewords if
$(z_1, \ldots, z_K)\in
\left(S_{s_1} \times S_{s_2} \times \ldots \times
S_{s_K}\right) \cap {\cal L}$. After decoding $x_1(s_1), \ldots, x_K(s_K)$ and
$z_1, \ldots, z_K$ the destination decodes ${\hat w}$ if
$\left(x(w), x_1(s_1), \ldots, x_K(s_K), {\hat y_1}(z_1|s_1), \ldots,
{\hat y_K}(z_K|s_K), y\right)$ is jointly typical.
 With this distributed CF strategy,
\begin{eqnarray}
\label{cfratefd}
R &\le& I(\bx;\by, \hat{\by}_1, \ldots, \hat{\by}_K| \bx_1, \ldots, \bx_K)
\end{eqnarray}
is achievable with the joint probability distribution
\begin{eqnarray*}
p(x)\left[\prod_{k=1}^Kp(x_k)p({\hat y}_k|x_k,y_k)\right]\times \\ p(y_1. \ldots, y_K, y|
x,x_1,\ldots,x_K),
\end{eqnarray*}
subject to
\begin{eqnarray}
\nonumber
I(\hat{\by}_{\cal T}; \by_{\cal T}  | \bx_{[K]} {\hat \by}_{{\cal T}^C} \by)
+ \sum_{t\in {\cal T}} I(\hat{\by}_{t}; \bx_{[K] / {t}} | \bx_{t}) \\\label{compconstraintfd}
\le I(\bx_{{\cal T}};\by|\bx_{{\cal T}^C}), \ \forall \ {\cal T} \subseteq [K],
\end{eqnarray}
where $\by_{\cal T}$, $\hat{\by}_{\cal T}$ are vectors with elements
$\by_{t},\ \hat{\by}_{t}, \ t \in {\cal T},\ {\cal T} \subseteq [K]$,
respectively,
$\bx_{[K]}$ is the vector containing $\bx_1, \bx_2, \ldots, \bx_K$, and
${\cal T}^C$ is the complement of ${\cal T}$, where ${\cal T} \subseteq [K]$.
For more detailed error probability analysis we refer the reader to \cite{Kramer2005}.
In the next Theorem
we compute the outage exponents for (\ref{cfratefd}) and show that
they match with the exponents of the upper bound.

\begin{thm} CF strategy achieves the DM-tradeoff upper bound
(Lemma \ref{dmtupboundfd}).
\end{thm}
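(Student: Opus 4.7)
The plan is to specialize the general CF achievable rate (\ref{cfratefd}) and the compression constraint (\ref{compconstraintfd}) to a Gaussian signaling scheme, reduce both to matrix determinants, and then perform a Zheng-Tse style eigenvalue analysis on the resulting outage events.

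First I would take $\bx\sim{\cal CN}(\mathbf{0},\bI_{M_0})$, $\bx_k\sim{\cal CN}(\mathbf{0},\bI_{m_k})$ mutually independent, and use Gaussian Wyner-Ziv quantization $\hat{\by}_k=\by_k+\hat{\bn}_k$ with $\hat{\bn}_k\sim{\cal CN}(\mathbf{0},\hat{N}_k\bI_{m_k})$ independent of everything else. Conditioning on $\bx_1,\ldots,\bx_K$ in (\ref{cfratefd}) removes the relay-to-destination contribution and the inter-relay interference, so the stacked observation $(\by,\hat{\by}_1,\ldots,\hat{\by}_K)$ reduces to a virtual MIMO channel $\bH_s^d\bx+\bn_{\mathrm{eff}}$, where $\bH_s^d$ vertically stacks $\bH_{sd},\bH_1,\ldots,\bH_K$ (each scaled by $\sqrt{P/M_0}$) and the noise covariance is the block-diagonal matrix $\mathrm{diag}(\bI_{M_2},(\hat{N}_1+1)\bI_{m_1},\ldots,(\hat{N}_K+1)\bI_{m_K})$. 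The right-hand side of (\ref{cfratefd}) then equals $\log L_s-\log L_0$ with $L_s$ as displayed just above the theorem and $L_0=\prod_k(\hat{N}_k+1)^{m_k}$.

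Next I would fix $\hat{N}_k$ at a constant that does not scale with $\SNR$ (e.g., $\hat{N}_k=1$). With this choice the rate behaves asymptotically like a standard point-to-point MIMO channel with $M_0$ transmit and $M_1+M_2$ receive antennas, so by (\ref{dmtmimo}) the outage exponent of $\{R<r\log\SNR\}$ is $(M_0-r)(M_1+M_2-r)$, matching the first branch of Lemma~\ref{dmtupboundfd}. Separately, the right-hand side $I(\bx_{{\cal T}};\by|\bx_{{\cal T}^C})$ of (\ref{compconstraintfd}) is a conditional MIMO MAC rate from the relays in ${\cal T}$ plus the source (which contributes $M_0$ independent transmit antennas through $\bH_{sd}$) to the destination, while the left-hand side is $O(\log\SNR)$ with a bounded coefficient coming from quantization and inter-relay interference. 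The dominant subset is ${\cal T}=[K]$, for which a MIMO outage calculation yields an exponent $(M_0+M_1-r)(M_2-r)$; smaller subsets ${\cal T}\subsetneq[K]$ give strictly larger exponents by monotonicity of the MIMO outage exponent in the number of antennas, so a union bound over ${\cal T}$ is absorbed into the ${\cal T}=[K]$ term.

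The total outage event is the union of rate outage and compression-constraint outage, so a union bound yields the overall exponent $\min\{(M_0-r)(M_1+M_2-r),\,(M_0+M_1-r)(M_2-r)\}$, coinciding with Lemma~\ref{dmtupboundfd}. The main obstacle I anticipate is the coupling between the inter-relay interference terms $\bF_{k\ell}\bx_{\ell}$ on the left-hand side of (\ref{compconstraintfd}) and the channels that appear on the right-hand side, which prevents a direct product decomposition of the two outage events; I would handle this by conditioning on the channels indexed by ${\cal T}^C$ and noting that after cancellation of the decoded contributions the relevant cross-terms contribute at most $O(1)$ to the outage exponent, so they do not alter the leading $\min\{\cdot,\cdot\}$ structure.
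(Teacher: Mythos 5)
Your setup (Gaussian inputs, Gaussian test channels $\hat{\by}_k=\by_k+\hat{\bn}_k$, reduction of (\ref{cfratefd}) to $\log L_s-\sum_k m_k\log(\hat N_k+1)$) matches the paper, and your first exponent $(M_0-r)(M_1+M_2-r)$ is obtained the same way. But the step where you fix $\hat N_k$ at an \emph{absolute} constant and then treat violation of (\ref{compconstraintfd}) as a separate outage event with exponent $(M_0+M_1-r)(M_2-r)$ contains a genuine gap. With $\hat N_k$ fixed, the left-hand side of (\ref{compconstraintfd}) is not a bounded perturbation: the Wyner--Ziv term $I(\hat{\by}_k;\by_k|\bx_{[K]}\hat{\by}_{[K]/k}\by)=\log L_{s\hat k}-\log L_{sd}-m_k\log\hat N_k$ grows like $c\log\SNR$ with $c>0$ whenever the relay observations carry information about $\bx$ beyond what $\by$ provides (e.g.\ $M_0>M_2$), and the inter-relay term $\log L_{s[K]/k}-\log L_{sk}$ likewise grows with $\SNR$ because $\bF_{\ell k}\bx_\ell$ has power proportional to $P$. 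Meanwhile the right-hand side $I(\bx_{\cal T};\by|\bx_{{\cal T}^C})=\log L_{skd}-\log L_{sd}$ treats the \emph{source} signal as interference (it is not conditioned on and does not act as $M_0$ cooperating antennas, contrary to your description), so it can even be $O(1)$. Hence with $\hat N_k=1$ the constraint fails generically, not on a rare event: your ``compression outage'' has probability tending to one and the union bound collapses to diversity zero.

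The missing idea is that $\hat N_k$ must be chosen \emph{channel-adaptively}. The paper derives the required quantization noise from (\ref{effcompnoise}), shows only that it does not scale proportionally to $P$, and then dominates it by $l_k\bigl((L_s/L_d)^{1/M_1}+1\bigr)$ as in (\ref{univnoise}). Substituting this into the rate turns $R$ into (essentially) the harmonic mean of $L_s^{1/M_1}$ and $L_d^{1/M_1}$, and a lemma from \cite{Yuksel2007} then splits the outage probability into $P(L_{sl}\le\SNR^r)+P(L_d\le\SNR^r)$, which is where the second branch $(M_0+M_1-r)(M_2-r)$ actually comes from: the destination cut limits how finely the relays may quantize, and the resulting quantization penalty, not a separate outage event, produces the $\min\{\cdot,\cdot\}$. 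To repair your argument you would either have to reproduce this adaptive choice of $\hat N_k$, or abandon the sequential decoding architecture underlying (\ref{compconstraintfd}) altogether in favor of a joint-decoding scheme, which is a different achievable-rate expression from the one you invoke.
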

\begin{proof}
To prove the Theorem we will compute the achievable DM-tradeoff of the CF
strategy (\ref{cfratefd}) and show that it matches with the upper bound.

To compute the achievable rates subject to the compression rate
constraints for the signal model (\ref{rxsigdmtfd}), we fix ${\hat
\by}_k = \by_k + \bn_{qr}$, where $\bn_{qk}$ is $m_k \times 1$ vector with
covariance matrix ${\hat N}_k\bI_{m_k}$. Also, we choose $\bx$,
and $\bx_k$ to be  complex Gaussian with covariance
matrices $\frac{P}{M_0}\bI_{M_0}$,  and
$\frac{P}{m_k}\bI_{m_k}$, and independent of each other, respectively. Next, we compute the various mutual information
expressions to derive the achievable DM-tradeoff of the CF strategy.
By the definition of the mutual information
\begin{align*}
&I(\bx;\by, \hat{\by}_1,\ldots, \hat{\by}_K | \bx_1, \ldots, \bx_K) = & \\
& \ \ \ \ \ \ \ \ \ h(\by, \hat{\by}_1,\ldots, \hat{\by}_K | \bx_1, \ldots, \bx_K)&\\
& \ \ \ \ \ \ \ \ \ - h(\by, \hat{\by}_1,\ldots, \hat{\by}_K | \bx, \bx_1, \ldots, \bx_K). & \end{align*}
From (\ref{rxsigdmtfd}),
\begin{eqnarray}
\label{ents-rd}
h(\by, \hat{\by}_1,\ldots, \hat{\by}_K | \bx_1, \ldots, \bx_K) =
\log L_{s}, \end{eqnarray}
 where $L_s$ is defined on the top of the page and  $\bH_{s}^{d} = \left[\bH_{sd} \ \bH_1 \ldots \bH_K\right]^T$.
From (\ref{rxsigdmtfd}),
\begin{small}
\begin{align*}
&h(\by, \hat{\by}_1,\ldots, \hat{\by}_K | \bx, \bx_1, \ldots, \bx_K) =& \\
& \log \det \left(\left[\begin{array}{cccc}
\bI_{M_2} & 0 & 0 & 0 \\
0 & \left({\hat N}_1+1\right)\bI_{m_1} & 0 & 0\\
0 & 0 & \ddots & 0  \\
0 & 0 & 0 & \left({\hat N}_K+1\right)\bI_{m_K} \end{array}\right]
\right),& \end{align*}
\end{small}
which implies
\begin{align}\nonumber
&I(\bx;\by, \hat{\by}_1,\ldots, \hat{\by}_K | \bx_1, \ldots, \bx_K) = & \\ \label{mis-rd}
& \ \ \ \ \log\frac{ L_{s}}{({\hat N}_1+1)^{m_1}({\hat N}_2+1)^{m_2}\ldots({\hat N}_K+1)^{m_K}}.&\end{align}

Next, we compute the values of ${\hat N}_k$'s that satisfy the
compression rate constraints (\ref{compconstraintfd}). Note that in
(\ref{compconstraintfd}), we need to satisfy the constraints for each
subset ${\cal T} \subseteq [K]$.
Towards that end, first we consider the subsets ${\cal T}$ of the form
${\cal T} = \{k\}, \ k=1,2,\ldots,K$, and obtain the lower bound on the quantization noise
${\hat N}_k$ needed
to satisfy (\ref{compconstraintfd}), that is not proportional to $P$
for each $k$. It is important to note that ${\hat N}_k$ should not be
proportional to $P$, otherwise from (\ref{mis-rd}) it can be concluded that our
distributed CF strategy cannot achieve the optimal DM-tradeoff.
In the sequel we will point out how to obtain ${\hat N}_k$
satisfying (\ref{compconstraintfd}) for all subsets of $[K]$.

\begin{figure*}
\begin{equation*}
L_{s[K] / k} = \det\left(\frac{P}{M_0}\bH_{k}\bH_{k}^{\dag}+
\sum_{\ell =1, \ \ell \ne k}^K\frac{P}{m_{\ell}}\bF_{\ell k}\bF_{\ell k}^{\dag}+ ({\hat N}_k+1)\bI_{m_k}\right)
\end{equation*}
\end{figure*}

\begin{figure*}
\begin{equation*}
L_{s{\hat k}} = {\det \left(\left[\begin{array}{cc}
\left({\hat N}_k+1\right)\bI_{m_k} & 0 \\
0 & \bI_{M_2}\end{array}\right] + \frac{P}{M_0}[\bH_{k} \ \bH_{sd}]^T[\bH_{k}^{\dag} \ \bH_{sd}^{\dag}]\right) }
\end{equation*}
\end{figure*}

For ${\cal T}= \{k\}$, from (\ref{compconstraintfd}),
for each relay $k$, we need to satisfy
\begin{eqnarray} \nonumber
I(\hat{\by}_{k}; \by_{k}  | \bx_{[K]} {\hat \by}_{[K] / k} \by)
+ I(\hat{\by}_{k}; \bx_{[K] / {k}} | \bx_{k})
\\ \label{compnoiseeff}\le I(\bx_{k};\by|\bx_{[K]/ k}).
\end{eqnarray}

By definition
\begin{align}\nonumber
&I(\bx_k;\by|\bx_{[K]/ k}) = h(\by|\bx_{[K]/ k}) - h(\by|\bx_k \bx_{[K]/ k}), &\\ \nonumber
 & \ \ \ \ = \log\underbrace{\det\left(\frac{P}{M_0}\bH_{sd}\bH_{sd}^{\dag}+
\frac{P}{m_k}\bG_k\bG_k^{\dag}+\bI_{M_2}\right)}_{L_{skd}} & \\\label{noise1}
&\ \ \ \ \ \ \ - \log\underbrace{\det\left(\frac{P}{M_0}\bH_{sd}\bH_{sd}^{\dag}+
\bI_{M_2}\right)}_{L_{sd}} \ \text{using} \ (\ref{rxsigdmtfd}).&
\end{align}

Similarly,
\begin{align}\nonumber
&I(\hat{\by}_{k}; \bx_{[K] / {k}} | \bx_{k})  = h(\hat{\by}_{k}| \bx_{k}) - h(\hat{\by}_{k}| \bx_{[K] / {k}}  \bx_{k}), &\\ \label{noise2}
&=\log L_{s[K] / k}  - \log\underbrace{\det\left(\frac{P}{M_0}\bH_{k}\bH_{k}^{\dag}+ ({\hat N}_k+1)\bI_{m_k}\right)}_{L_{sk}},
&\end{align}
where $L_{s[K] / k} $ is defined on the top of the page.

\begin{align}\nonumber
&I(\hat{\by}_{k}; \by_{k}  | \bx_{[K]} {\hat \by}_{[K] / k} \by)
= h(\hat{\by}_{k}, \by | \bx_{[K]} {\hat \by}_{[K] / k} ) & \\ \nonumber
& \ \ \ \ \ \  \ \ \ \ \ \ \ \ \ \ \ \ \ \ \ \ \ \ \ \ \ \  - h(\by | \bx_{[K]} {\hat \by}_{[K] / k} ) - h(\hat{\by}_{k}| \by_{k}), & \\ \nonumber
&= \log L_{s{\hat k}} -
\log\underbrace{\det
\left(
\frac{P}{M_0}
\bH_{sd}\bH_{sd}^{\dag} +\bI_{M_2}
\right)
}_{L_{sd}}
&  \\\label{noise3}
&  \ \ - \log {\hat N}_k^{m_k},&
\end{align}
where $L_{s{\hat k}} $ is defined on the top of the page.

%
From (\ref{noise1}, \ref{noise2}, \ref{noise3}), to satisfy the compression rate constraints (\ref{compnoiseeff})
we need
\begin{equation}
\label{effcompnoise}
{\hat N}_k^{m_k} \ge \frac{L_{s[K]/k} L_{s{\hat k}}}{L_{skd}L_{sk}}.
\end{equation}
Note that both sides of (\ref{effcompnoise}) are functions
of ${\hat N}_k$, however, the resulting ${\hat N}_k$ is not a
function of $P$ or SNR similar to \cite{Yuksel2007}.
Recall that we have only considered the
subsets of $[K]$ of the form ${\cal T} =\{k\}$. For the rest of the
subsets also, we can show that the quantization noise ${\hat N}_k$
required to satisfy (\ref{compconstraintfd}) is not proportional to
$P$. The analysis follows similarly and is deleted for the sake
of brevity. Thus, to satisfy (\ref{compconstraintfd}), we can take
the maximum of the ${\hat N}_k$ required for each subset ${\cal
T}\subseteq [K]$ and use that to analyze the DM-tradeoff. Let the
maximum ${\hat N}_k$ required to satisfy (\ref{compconstraintfd}) be
${\hat N}_{max, k}$. Since ${\hat N}_k$ for each subset ${\cal
T}\subseteq [K]$ is not proportional to $P$, ${\hat N}_{max, k}$ is
also not proportional to $P$.

Then, using (\ref{cfratefd}) and (\ref{mis-rd}), we can
compute the outage probability of the distributed CF as follows.
From \cite{Zheng2003},
to compute $d(r)$, it is sufficient to find the negative of
the exponent of the $\SNR$ of outage probability at the destination,
where outage probability $P_{out}(r\log \SNR)$, is defined as
\begin{eqnarray*}
P_{out}(r\log \SNR) &=& P(R \le r\log \SNR).
\end{eqnarray*}
From (\ref{cfratefd}) and (\ref{mis-rd}),
\begin{equation}
R  = \log \frac{L_{s}}{({\hat N}_{max, 1}+1)^{m_1}\ldots({\hat N}_{max,  K}+1)^{m_K}}.
\end{equation}
Let $L_d \bydef \log\det\left(\frac{P}{M_0}\bH_{sd}\bH_{sd}^{\dag}+
\sum_{k=1}^M\frac{P}{m_k}\bG_k\bG_k^{\dag}+\bI_{M_2}\right)$.
Then choose $l_k \in \bbZ$ such that
\begin{equation}
\label{univnoise}
{\hat N}_{max,  k}
\le l_k\left(\left(\frac{L_{s}}{L_{d}}\right)^{1/{M_1}} + 1\right)
, \ \forall \ k.
\end{equation}
It is possible to choose $l_k$'s that satisfy (\ref{univnoise}),
since ${\hat N}_{max,  K}$ is not proportional to $P$.

Then
\begin{align*}
&P_{out}(r\log \SNR) =&  \\ 
& \ \ \ P\left(\log
\frac{L_{s}}
{\prod_{k=1}^K l_k\left(\left(\frac{L_{s}}{L_{d}}\right)^{1/{M_1}} + 1
\right)^{m_k}}  \le r\log \SNR\right),& \\
&  = P\left(\log \frac{L_{s}}
{\left(\left(
\frac{L_{s}}{L_{d}}
\right)^{1/{M_1}} + 1
\right)^{M_1}\prod_{k=1}^K l_k
}\le r\log \SNR\right).&
\end{align*}

\begin{align*}
&P_{out}(k\log \SNR)
\expeq&  \\ 
& \ \ 
P\left( \frac{L_{s}}
{
\left(\left(
\frac{L_{s}}{L_{d}}
\right)^{1/{M_1}} + 1
\right)^{M_1}
}\le \prod_{k=1}^K l_k \SNR^r\right),&\\
&=  P\left(
\frac{ \left(L_{s}\right)^{\frac{1}{M_1}} \left(L_{d}\right)^{\frac{1}{M_1}}  }
{ \left(L_{s}\right)^{\frac{1}{M_1}} +  \left(L_{d}\right)^{\frac{1}{M_1}}  }
\le \prod_{k=1}^K l_k^{\frac{1}{M_1}} \SNR^{\frac{r}{M_1}}\right),&\\
&=  P\left(
\frac{ \left(L_{s}\right)^{\frac{1}{M_1}} \left(L_{d}\right)^{\frac{1}{M_1}}  }
{ \left(L_{s}\right)^{\frac{1}{M_1}} +  \left(L_{d}\right)^{\frac{1}{M_1}}  }
\le \SNR^{\frac{r}{M_1}}\right),&
\end{align*}
where the last equality follows since multiplying $\SNR$ by constant does
not change the DM-tradeoff.

From here on we follow \cite{Yuksel2007} to compute the exponent of the
$P_{out}(r\log \SNR)$.

Let
\begin{equation*}L_{sl} = \det\left(\frac{P}{M_0}\bH_{s}^{d}\bH_{s}^{d\dag} +
\left[\begin{array}{ccccc}
\bI_{M_2}& 0  & 0 & 0\\
0 & \bI_{m_1} & 0  & 0 \\
0 & 0 & \ddots & 0 \\
0& 0 &  0 & \bI_{m_K}
\end{array}\right]\right).
\end{equation*}
Then, from (\ref{ents-rd}), $L_{sl} \le L_{s}$,
therefore using Lemma 2 \cite{Yuksel2007}, it follows that
\begin{eqnarray*}
P_{out}(r\log \SNR)
&\le & P\left( \left(L_{sl}\right)^{\frac{1}{M_1}} \le \SNR^{\frac{r}{M_1}}\right) \\
 &&+ \ P\left( \left(L_{d}\right)^{\frac{1}{M_1}} \le \SNR^{\frac{r}{M_1}}\right) , \\
&=& P\left( L_{sl} \le \SNR^{r}\right)\\
&& + \ P\left( L_{d} \le \SNR^{r}\right), \\
& \bydef & \SNR^{-d_1(r)} + \SNR^{-d_2(r)}.
\end{eqnarray*}
Therefore, to lower bound the DM-tradeoff we need to find out the
outage exponents $d_1(r)$ and $d_2(r)$ of $L_{sl}$ and
$L_{s}$. Notice that, however, $\log \left(L_{sl}\right)$ is the mutual
information between the source and the destination by choosing the
covariance matrix to be $\frac{P}{M_0}\bI_{M_0}$\footnote{$P$ taking
the role of $\SNR$.}, and allowing all the relays and the
destination to cooperate perfectly. From \cite{Zheng2003}, choice of
$\frac{P}{M_0}\bI_{M_0}$ as the covariance matrix does not change
the optimal DM-tradeoff, therefore, $d_1(r) = (M_0-r)(M_1+M_2-r)$.
Similar argument holds for $\log\left(L_{d}\right)$, by noting that
$\log\left(L_{d}\right)$ is
the mutual information between the source and the destination if all
the relays and the source were co-located and could cooperate
perfectly, while using covariance matrix $\bQ$, where
\[\bQ = \left[\begin{array}{cccc}\frac{P}{M_0}\bI_{M_0} & 0 & 0 & 0 \\
0 & \frac{P}{m_1}\bI_{m_1} & 0 & 0 \\
0 & 0& \ddots & 0 \\
0 & 0 & 0 & \frac{P}{m_K}\bI_{m_K}\end{array}\right].\]
Thus, $d_2(r) =
(M_0+M_1-r)(M_2-r)$. Thus, the achievable DM-tradeoff with CF
strategy meets the upper bound (Lemma \ref{dmtupboundfd}).
\end{proof}
{\it Discussion:}
In this section we proposed a simplified version of the distributed CF
strategy of \cite{Kramer2005} and showed that it can achieve the
optimal DM-tradeoff for the $2$-hop relay channel for any number of relays.
In our distributed CF strategy, each relay uses Wyner-Ziv coding to
compress the received signal without any partial decoding of other relay
messages. After compression, each relay
transmits the message to the destination using the
strategy for multiple access channel with correlated messages \cite{Cover1980},  since the relay compressed messages are correlated with each other.
Even though the achievable rate with our strategy is smaller than
the one obtained in \cite{Kramer2005} (because of no partial decoding at any
relay), we show that it is sufficient to achieve the optimal DM-tradeoff.
We prove the result by showing that the exponent of the outage probability
of our strategy matches with the upper bound on the optimal DM-tradeoff,
without requiring the compression noise constraints to be proportional to
the SNR.

Generalizing our distributed
CF strategy is possible for more than $2$-hop relay channel, however,
computing the exponents of the outage probability  of achievable rate and
compression rate constraints is a non-trivial problem.

\section{Conclusions}
\label{sec:conc}
In this paper we considered the problem of achieving the optimal DM-tradeoff
of the multi-hop relay channel. First, we proposed an antenna selection
strategy called JEEMAS, where a subset of antennas of each relay stage is chosen for
transmission that has the maximum mutual information at the destination. We
showed that the JEEMAS strategy can achieve the maximum diversity gain and
the maximum multiplexing gain in a multi-hop relay channel. Then we compared
the DM-tradeoff performance of the JEEMAS strategy with the best known
DSTBC strategy \cite{Sreeram2008}. We observed that the DM-tradeoff of the
JEEMAS is better than the DSTBCs \cite{Sreeram2008},
expect for the case when the number of antennas at each stage are divisible by the minimum of the antennas
across all relay stages, in which case the DM-tradeoffs of JEEMAS and DSTBCs
\cite{Sreeram2008} match.

Next, we proposed a distributed CF strategy for the $2$-hop relay channel
with multiple relay nodes and showed that it achieves the optimal DM-tradeoff.
Our distributed CF strategy is a special case of the strategy proposed in
\cite{Kramer2005}, where the specializations are done to allow analytical
tractability. We showed that if each relay transmits a compressed
version of the received signal using Wyner-Ziv coding, it
is sufficient to achieve the optimal DM-tradeoff. Our distributed CF strategy
can be extended to more than $2$-hop relay channels, however, computing the
outage probability exponents is a non-trivial problem.

\bibliographystyle{IEEEtran}
\bibliography{IEEEabrv,Research}

\begin{thebibliography}{10}
\providecommand{\url}[1]{#1}
\csname url@samestyle\endcsname
\providecommand{\newblock}{\relax}
\providecommand{\bibinfo}[2]{#2}
\providecommand{\BIBentrySTDinterwordspacing}{\spaceskip=0pt\relax}
\providecommand{\BIBentryALTinterwordstretchfactor}{4}
\providecommand{\BIBentryALTinterwordspacing}{\spaceskip=\fontdimen2\font plus
\BIBentryALTinterwordstretchfactor\fontdimen3\font minus
  \fontdimen4\font\relax}
\providecommand{\BIBforeignlanguage}[2]{{%
\expandafter\ifx\csname l@#1\endcsname\relax
\typeout{** WARNING: IEEEtran.bst: No hyphenation pattern has been}%
\typeout{** loaded for the language `#1'. Using the pattern for}%
\typeout{** the default language instead.}%
\else
\language=\csname l@#1\endcsname
\fi
#2}}
\providecommand{\BIBdecl}{\relax}
\BIBdecl

\bibitem{Zheng2003}
L.~Zheng and D.~Tse, ``Diversity and multiplexing: A fundamental tradeoff in
  multiple-antenna channels,'' \emph{{IEEE} Trans. Inf. Theory}, vol.~49,
  no.~5, pp. 1073--1096, May 2003.

\bibitem{Laneman2003}
J.~Laneman and G.~Wornell, ``Distributed space-time-coded protocols for
  exploiting cooperative diversity in wireless networks,'' \emph{{IEEE} Trans.
  Inf. Theory}, vol.~49, no.~10, pp. 2415--2425, Oct. 2003.

\bibitem{Laneman2004}
J.~Laneman, D.~Tse, and G.~Wornell, ``Cooperative diversity in wireless
  networks: Efficient protocols and outage behavior,'' \emph{{IEEE} Trans. Inf.
  Theory}, vol.~50, no.~12, pp. 3062--3080, Dec. 2004.

\bibitem{Nabar2004}
R.~Nabar, H.~Bolcskei, and F.~Kneubuhler, ``Fading relay channels: performance
  limits and space-time signal design,'' \emph{{IEEE} J. Sel. Areas Commun.},
  vol.~22, no.~6, pp. 1099--1109, Aug. 2004.

\bibitem{Jing2004d}
J.~Yindi and B.~Hassibi, ``Distributed space-time coding in wireless relay
  networks with multiple-antenna nodes, submitted,'' \emph{{IEEE} Trans. Signal
  Process.}, 2004.

\bibitem{Jing2006a}
------, ``Distributed space-time coding in wireless relay networks,''
  \emph{{IEEE} Trans. Wireless Commun.}, vol.~5, no.~12, pp. 3524--3536, Dec.
  2006.

\bibitem{Belfiore2007}
C.~Yang and J.-C. Belfiore, ``Optimal space time codes for the {MIMO}
  amplify-and-forward cooperative channel,'' \emph{{IEEE} Trans. Inf. Theory},
  vol.~53, no.~2, pp. 647--663, Feb. 2007.

\bibitem{Yiu2005}
S.~Yiu, R.~Schober, and L.~Lampe, ``Distributed space-time block coding for
  cooperative networks with multiple-antenna nodes,'' in \emph{Computational
  Advances in Multi-Sensor Adaptive Processing, 2005 1st IEEE International
  Workshop on}, 13-15 Dec. 2005, pp. 52--55.

\bibitem{Barbarossa2004}
S.~Barbarossa and G.~Scutari, ``Distributed space-time coding strategies for
  wideband multihop networks: regenerative vs. non-regenerative relays,'' in
  \emph{Acoustics, Speech, and Signal Processing, 2004. Proceedings. (ICASSP
  '04). IEEE International Conference on}, vol.~4, 17-21 May 2004, pp.
  iv--501--iv--504vol.4.

\bibitem{Damen2007}
M.~Damen and R.~Hammons, ``Distributed space-time codes: relays delays and code
  word overlays,'' in \emph{ACM International Conference On Communications And
  Mobile Computing 2007, Honolulu, Hawaii, USA}, 12-16 Aug. 2007, pp. 354--357.

\bibitem{Oggier2006k}
F.~Oggier and B.~Hassibi, ``An algebraic family of distributed space-time codes
  for wireless relay networks,'' in \emph{IEEE International Symposium on
  Information Theory, 2006}, July 2006, pp. 538--541.

\bibitem{Jing2007}
J.~Yindi and B.~Hassibi, ``Using orthogonal and quasi-orthogonal designs in
  wireless relay networks,'' \emph{{IEEE} Trans. Inf. Theory}, vol.~53, no.~11,
  pp. 4106--4118, Nov. 2007.

\bibitem{Jing2008}
Y.~Jing and H.~Jafarkhani, ``Network beamforming with channel means and
  covariances at relays,'' \emph{IEEE International Conference on
  Communications, 2008. ICC '08.}, pp. 3743--3747, May 2008.

\bibitem{Yang2007a}
S.~Yang and J.~Belfiore, ``Diversity of {MIMO} multihop relay channels,'' Aug.
  2007, available on http://arxiv.org/PScache/arxiv/pdf/0708/0708. 0386v1.pdf.

\bibitem{Sreeram2008}
K.~Sreeram, S.~Birenjith, and P.~Vijay~Kumar, ``{DMT} of multi-hop cooperative
  networks - part {II}: Half-duplex networks with full-duplex performance,''
  \emph{{IEEE} Trans. Inf. Theory}, submitted, Aug. 2008, available on
  http://arxiv.org.

\bibitem{Vaze2008}
R.~Vaze and R.~W. Heath~Jr., ``Maximizing reliability im multi-hop wireless
  networks,'' in \emph{IEEE Int. Symposium on Information Theory (ISIT) 2008,
  Toronto}, July 2008, pp. 11--15.

\bibitem{Oggier2007b}
F.~Oggier and B.~Hassibi, ``Code design for multihop wireless relay networks,''
  Oct. 2007, available on www.hindawi.com.

\bibitem{Peters2007}
S.~Peters and R.~W. Heath~Jr., ``Nonregenerative {MIMO} relaying with optimal
  transmit antenna selection, accepted for publication in,'' \emph{{IEEE}
  Signal Process. Lett.}, Jan. 2008, available on
  http://arxiv.org/abs/0801.3272.

\bibitem{Bletsas2006}
A.~Bletsas, A.~Khisti, D.~Reed, and A.~Lippman, ``A simple cooperative
  diversity method based on network path selection,'' \emph{{IEEE} J. Sel.
  Areas Commun.}, vol.~24, no.~3, pp. 659--672, March 2006.

\bibitem{Zinan2005}
L.~Zinan and E.~Erkip, ``Relay search algorithms for coded cooperative
  systems,'' \emph{IEEE Global Telecommunications Conference, 2005. GLOBECOM
  '05}, vol.~3, 28 Nov.-2 Dec. 2005.

\bibitem{Ibrahim2008}
A.~Ibrahim, A.~Sadek, W.~Su, and K.~Liu, ``Cooperative communications with
  relay selection: when to cooperate and whom to cooperate with?'' available at
  \url{http://www.ece.umd.edu/asalah/IbrahimRelaySelectionTWC.pdf}.

\bibitem{Caleb2007}
C.~K. Lo, S.~Vishwanath, and R.~Heath, Jr, ``Relay subset selection in wireless
  networks using partial decode-and-forward transmission,'' \emph{{IEEE} Trans.
  Veh. Technol.}, submitted, available on
  http://arxiv.org/PScache/arxiv/pdf/0711/0711.3205v1.pdf 2007.

\bibitem{Tanni2008}
R.~Tannious and A.~Nosratinia, ``Spectrally-efficient relay selection with
  limited feedback,'' \emph{{IEEE} J. Sel. Areas Commun.}, vol.~26, no.~8, pp.
  1419--1428, Oct. 2008.

\bibitem{Vaze2008jsptocode}
R.~Vaze and R.~Heath, Jr, ``To code or not to code in multi-hop relay
  channels,'' \emph{{IEEE} Trans. Signal Process.}, accepted for publication
  Jan. 2009, available on http://arxiv.org.

\bibitem{Park2003}
M.~Park, J.~Andrews, and S.~Nettles, ``Wireless channel-aware ad hoc
  cross-layer protocol with multiroute path selection diversity,''
  \emph{Vehicular Technology Conference VTC 2003-Fall. 2003 IEEE 58th}, vol.~4,
  pp. 2197--2201 Vol.4, 6-9 Oct. 2003.

\bibitem{Gui2007}
B.~Gui, L.~Dai, and L.~Cimini, ``Routing strategies in broadband multihop
  cooperative networks,'' \emph{41st Annual Conference on Information Sciences
  and Systems, 2007. CISS '07.}, pp. 661--666, 14-16 March 2007.

\bibitem{Bohacek2008}
S.~Bohacek, ``Performance improvements provided by route diversity in multihop
  wireless networks,'' \emph{{IEEE} Trans. Mobile Comput.}, vol.~7, no.~3, pp.
  372--384, March 2008.

\bibitem{Yuksel2007}
M.~Yuksel and E.~Erkip, ``Multiple-antenna cooperative wireless systems: A
  diversity-multiplexing tradeoff perspective,'' \emph{{IEEE} Trans. Inf.
  Theory}, vol.~53, no.~10, pp. 3371--3393, Oct. 2007.

\bibitem{Cover1979}
T.~Cover and A.~El~Gamal, ``Capacity theorems for relay channels,''
  \emph{{IEEE} Trans. Inf. Theory}, vol.~25, no.~5, pp. 572--584, Sept. 1979.

\bibitem{Wyner1976}
A.~Wyner and J.~Ziv, ``The rate-distortion function for source coding with side
  information at the decoder,'' \emph{{IEEE} Trans. Inf. Theory}, vol.~22,
  no.~1, pp. 1--10, Jan 1976.

\bibitem{Kramer2005}
G.~Kramer, M.~Gastpar, and P.~Gupta, ``Cooperative strategies and capacity
  theorems for relay networks,'' \emph{{IEEE} Trans. Inf. Theory}, vol.~51,
  no.~9, pp. 3037--3063, Sept. 2005.

\bibitem{Cover2004}
T.~Cover and J.~Thomas, \emph{Elements of Information Theory}.\hskip 1em plus
  0.5em minus 0.4em\relax John Wiley and Sons, 2004.

\bibitem{Cover1980}
T.~Cover, A.~Gamal, and M.~Salehi, ``Multiple access channels with arbitrarily
  correlated sources,'' \emph{{IEEE} Trans. Inf. Theory}, vol.~26, no.~6, pp.
  648--657, Nov 1980.

\end{thebibliography}

\end{document}